\documentclass[11pt, a4paper, onecolumn, copyright, gdm]{google}

\usepackage{graphicx}

\usepackage{amsmath,amssymb,amsthm}
\allowdisplaybreaks
\usepackage{booktabs}

\usepackage{cleveref}
\usepackage{tabularx}
\usepackage{booktabs}
\usepackage{xcolor}
\usepackage{fancyvrb}
\usepackage{pgfplots}
\usepackage{multirow}
\pgfplotsset{compat=1.18}

\usepackage{listings}
\usepackage[scale=0.9]{inconsolata}
\usepackage[T1]{fontenc}
\usepackage{enumitem}

\usepackage{tikz}
\usetikzlibrary{positioning, arrows.meta, calc}

\newtheorem{theorem}{Theorem}

\newcommand{\xtarget}{x_t}

\definecolor{colorSDP}{HTML}{1D4ED8}
\definecolor{colorPraline}{HTML}{DC2626}
\definecolor{colorMC}{HTML}{16A34A}
\definecolor{colorDet}{HTML}{374151}

\usepackage[authoryear, sort&compress, round]{natbib}
\uselogo{} 

\title{Efficient and Sound Probabilistic Verification for AI Agents}

\correspondingauthor{alaia@engineering.upenn.edu \& dvij@google.com}
\affiliationnote{*Work done as a Student Researcher at Google DeepMind.}

\author[1,3,*]{Alaia Solko-Breslin}
\author[1]{Pramod Kaushik Mudrakarta}
\author[2]{Mihai Christodorescu}
\author[2,4]{Somesh Jha}
\author[1]{Krishnamurthy Dj Dvijotham}

\affil[1]{\thepa{}{}}
\affil[2]{Google}
\affil[3]{University of Pennsylvania}
\affil[4]{University of Wisconsin–Madison}

\begin{abstract}

Securing AI agents that operate in complex digital environments has become a critical need, and runtime monitoring approaches that formulate and enforce policies expressed in a formal language like Datalog offer a promising solution. However, existing approaches are restricted to deterministic policies. In many practical applications of AI agents, there is a need to enforce security policies in the face of ambiguity, leading to probabilistic predicates or state transitions (for example, a declassifier or Personally Identifiable Information (PII) detector that has some failure probability on each invocation). Furthermore, in many such applications, one cannot easily make the independence assumptions necessary to invoke prior work on probabilistic inference in Datalog. We address this by introducing a sound and efficient framework for such verification based on distributionally robust optimization, computing sound upper bounds on the probability of policy violation regardless of possible correlations between predicates.  On standard benchmarks for terminal and tool calling agents, we demonstrate that our approach outperforms prior art and improves the security-utility trade-off while ensuring rigorous bounds on the probability of policy violation.

\end{abstract}

\begin{document}

\maketitle

\section{Introduction}

The emergence of Large Language Models (LLMs) has enabled a new computing paradigm centered on AI agents that execute complex, multi-step tasks by acting directly within system environments \citep{yao2023react,schick2023toolformer,park2023generativagents}.
To fulfill user requests, these agents are given programmatic access to system interfaces, including terminals,  file systems, and external network APIs.
Despite its utility, this architecture introduces critical security and privacy risks by granting agents the ability to autonomously read, modify, and transmit data across system boundaries.

Consider how an agent that assists an enterprise employee with emailing contracts to a third-party vendor might accidentally share confidential data from the employee's filesystem that the vendor is not authorized to receive. Agents are susceptible to such security failures even in completely trusted environments due to execution or reasoning errors, but the threat compounds dramatically in untrusted settings, where adversarial tactics like prompt-injection attacks can override the agent's logic to fundamentally alter its planned trajectory~\citep{zhang2025agent,zhan2024injecagent,debenedetti2024agentdojo}.
Recent discoveries of successful real-world exploits validate these threats \citep{microsoft2026rce-vulnerabilities,bourtoule2026isolationoldvulnerabilities}, demonstrating that without guardrails, attackers can systematically hijack agent toolchains to silently exfiltrate sensitive user data.

These risks motivate the development of runtime defenses to enforce policy compliance during agent execution.
To provide formal enforcement guarantees, recent frameworks adapt the classical reference monitor paradigm to intercept proposed tool calls before execution~\citep{chen2025shieldagent,wang2026agentspec,palumbo2026formalpolicyenforcementrealworld}.
These monitors evaluate specific environmental predicates to make a decision to allow or block the action, e.g., only allowing a data-sharing action if it is determined that the target file does not contain sensitive data. However, in practical settings, an agent's environment is inherently ambiguous, for example from the imperfect accuracy of upstream classifiers and redaction tools. This persistent uncertainty necessitates a paradigm shift toward probabilistic verification, where policy compliance must be formally evaluated over probabilistic predicates.

A straightforward way to integrate probabilistic predicates into deterministic monitors would be to apply a threshold on the probability (say, the probability that a document contains PII).
However, this discretization introduces a fragile trade-off between security and utility.
For predicates representing the probability that an object is sensitive, high thresholds risk under-blocking harmful actions, and low thresholds risk over-blocking safe actions.
Ultimately, forcing continuous probabilities into binary choices prematurely discards the exact context needed to guide enforcement decisions, leaving deterministic monitors incapable of handling real-world ambiguity; probabilistic verification must therefore preserve this context.
To properly balance security and utility, the monitor should evaluate the current trajectory (i.e., its history of actions and data accesses) to compute a conservative upper bound on the global probability of the agent entering an unsafe state.
This computed upper bound is then compared against a global safety threshold that dictates whether the monitor blocks or allows the proposed action.

We formalize the computation of this global risk bound as probabilistic inference over an execution trajectory, where each tool invoked by the agent implements a specific state propagation contract.
These propagation rules can be either deterministic, such as copying a file to a new destination, or stochastic, such as noisy data declassification. 
The initial facts fed into these rules are parameterized by lower and upper marginal probability bounds, e.g., from a PII detector.
To track these dependencies across a multi-step trajectory, we express the state transitions as a Datalog program and compile the execution trace into a derivation graph supporting logical conjunction, disjunction, and inversion.

To evaluate execution risk, we model the Datalog derivation graph as an exact optimization problem over joint probability measures.
Specifically, we formulate an exponentially-sized linear program (LP) that enforces the marginal probabilities and state transition rules.
Importantly, this formulation makes no assumptions about the underlying joint distribution, remaining distributionally robust whether the input facts are statistically correlated or independent, unless such information is explicitly available as part of the model of the environment that the agent operates in.
This stands in direct contrast to standard probabilistic techniques like Monte Carlo sampling or Weighted Model Counting, which assume mutual independence and can severely underestimate risk during tool chaining. 
Because optimizing over the full state space of the exact LP is computationally intractable for real-time monitoring, we introduce a polynomially-sized semidefinite programming (SDP) relaxation.
Based on distributionally robust optimization (DRO) \citep{wiesemann2014dro}, this formulation tracks only second-order moments of the state distribution, yielding a sound over-approximation of the true worst-case probability of violation. This ensures that security is preserved, at the risk of a reduction in utility (which we empirically measure in our evaluation).

We evaluate our framework on terminal agent benchmarks, including Intercode-NL2Bash \citep{yang2023intercode} and ATBench \citep{liu2026agentdogdiagnosticguardrailframework}.
For these environments, we formalize taint transition semantics across tool calls to bound the worst-case probability of sensitive files being leaked during data-sharing actions.
To evaluate our SDP relaxation on other security policies, we also consider tasks from the Praline benchmarks \citep{wang2025praline}, including a side-channel vulnerability analysis.
Our results demonstrate that our approach outperforms prior art in terms of the security-utility tradeoff.

In summary, our key contributions are as follows:
\begin{enumerate}
    \item We identify security risks in deterministic verification engines in ambiguous agent environments and introduce a model for probabilistic verification. 
    \item We model multi-step agent trajectories via Datalog derivation graphs and formalize the computation of worst-case policy violation risk as an exact LP.
    \item We introduce a polynomially-sized SDP relaxation that tracks second-order moments to efficiently approximate risk at runtime, and we formally prove its soundness as a strict upper bound on execution risk.
    \item We evaluate our framework across terminal agent benchmarks, demonstrating that our relaxation effectively balances the security-utility tradeoff with low computational overhead.
\end{enumerate}

\section{Background and Motivation}

\subsection{Runtime Execution Monitoring for Agents}

Runtime execution monitoring uses a stateful reference monitor \citep{Anderson:1972} to enforce policies over an agent's execution trajectory by mediating tool access. The monitor acts as an inline interposition layer, intercepting candidate tool invocations before they execute or update the environment state.

\begin{figure*}
    \centering
    \includegraphics[width=\linewidth]{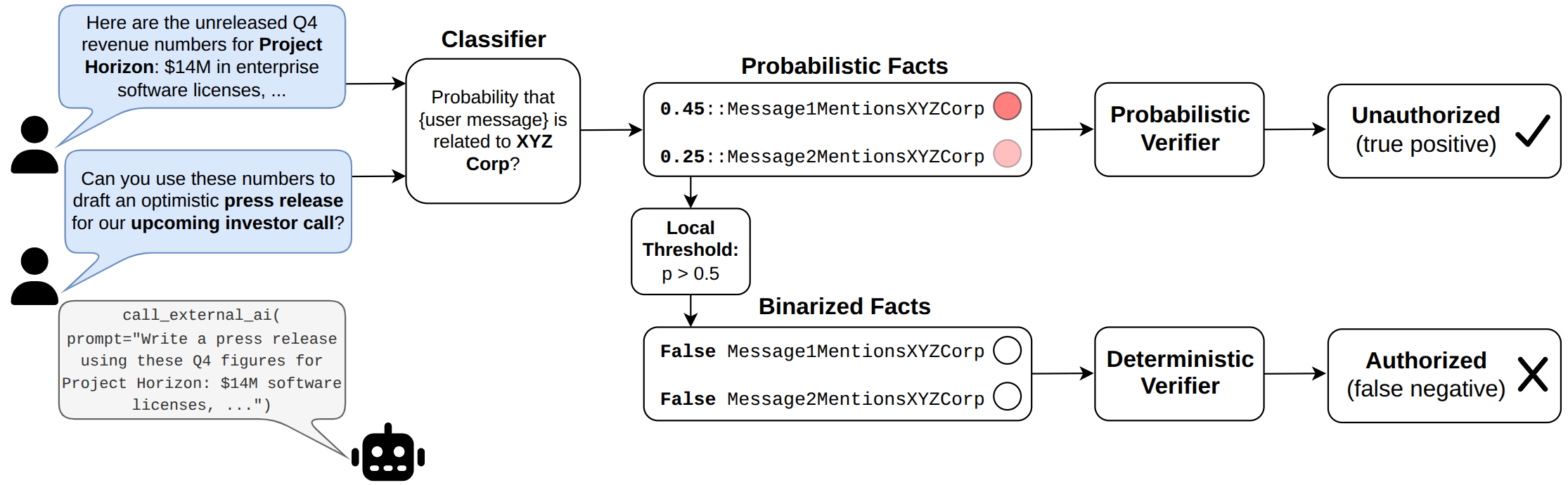}
    \caption{Overview of the deterministic and probabilistic verification paradigms for agents (based on the policy from Listing \ref{lsting:forge-policy}). Binary evaluation (e.g., FORGE \citep{palumbo2026formalpolicyenforcementrealworld}) requires applying local thresholds to probabilistic facts, leading to a loss of information. Despite both messages indirectly referring to XYZ Corp, their predicates are classified as False, resulting in the decision to authorize the tool (false negative). Conversely, probabilistic verification determines that the cumulative probability of the messages mentioning XYZ Corp is high enough to warrant denial (true positive).}
    \label{fig:probabilistic-agent-verification}
\end{figure*}

Formally, we model the policy-relevant environmental context as a state $s \in \mathcal{S}$, initialized to a base configuration $s_0$. An execution trajectory is a sequence of tool calls $t_1, t_2, \dots, t_T$, where each $t_k \in \mathcal{T}$ specifies a tool identifier and its runtime arguments. The environmental effects of these interactions are governed by a state transition function:
$$\Delta: \mathcal{S} \times \mathcal{T} \rightarrow \mathcal{S}$$
which maps the current system state and a candidate tool call to a successor state.
Execution boundaries are specified via a policy predicate $\Pi: \mathcal{S} \rightarrow \{\text{Allow}, \text{Block}\}$, which determines whether a given state configuration is compliant.

The reference monitor enforces safety via an intercept-evaluate-commit cycle at each step $k$. When the agent issues a candidate tool call $t_k$ from state $s_k$, the monitor suspends execution and speculatively projects the next system state: $s'_{k+1} = \Delta(s_k, t_k)$. If $\Pi(s'_{k+1}) = \text{Block}$, the monitor blocks the tool invocation, discards the speculative update ($s_{k+1} \leftarrow s_k$), and returns structured feedback to the agent to trigger autonomous error recovery. Otherwise, the tool is executed and the state transition is committed ($s_{k+1} \leftarrow s'_{k+1}$).

\subsection{Probabilistic Execution Monitoring}\label{sec:probabilistic-execution-monitoring}

While deterministic reference monitors provide strong structural guarantees for discrete inputs, their enforcement capabilities degrade when policies rely on noisy or ambiguous predicates (say, the output of a learned classifier or LLM-based judge). 
Traditional monitors can handle this ambiguity by applying local thresholds to convert probabilistic outputs into binary values. 
However, this early binarization creates an information bottleneck, discarding marginal uncertainties that compound over multi-step trajectories into global policy violations.

To illustrate this vulnerability, consider a security policy from FORGE~\citep{palumbo2026formalpolicyenforcementrealworld} designed to prevent unauthorized data exfiltration. The policy blocks an agent's call to an external API if its contents depend on a user message that contains sensitive proprietary information:

\begin{lstlisting}[caption={Privacy policy for protecting sensitive proprietary information.}, label={lsting:forge-policy},language=Prolog]
Unauthorized(a) :-
    Actions(a),
    is_tool_call(a),
    Current(id),
    Depends(id, id2),
    SentMessage(id2, msg),
    msg.agent_role == User,
    llm_check("Message references XYZ Corp", msg.contents).
\end{lstlisting}
Because the variable \texttt{id2} can bind to any past message within the context window, this single rule can be expanded into a logical disjunction evaluating risk across the entire multi-turn conversation history. 

When a reference monitor relies on local thresholding, it forces a hard decision on each past message branch independently before evaluating the global policy.
As illustrated in \autoref{fig:probabilistic-agent-verification}, we consider an execution trace containing two consecutive user messages: $m_1$ (``Here are the unreleased Q4 revenue numbers for Project Horizon: \$14M in enterprise software licenses, ...'') and $m_2$ (``Can you use these numbers to draft an optimistic press release for our upcoming investor call?'').
Due to the ambiguity of internal project codenames and what is meant by the ``upcoming investor call,'' a classifier might return a marginal risk probability of $0.45$ for $m_1$ and $0.25$ for $m_2$.
If the agent decides to call an external API to summarize the Q4 figures for the press release, this prompt definitively depends on $m_1$ and $m_2$, so the security question becomes: does either message reference XYZ Corp?
If the reference monitor applies a local threshold of $0.50$ to each predicate, a deterministic engine would evaluate both predicates to \texttt{False}, underestimating the collective threat and allowing the unsafe execution.

Conversely, probabilistic verification preserves continuous weights and computes the joint probability of the entire query over the execution trace.
Therefore, if the reference monitor computes the cumulative probability of these predicates and applies a global threshold (e.g., $0.50$) to the final risk probability, it would correctly evaluate the policy as \texttt{Unauthorized} and block the unsafe execution.

Computing exact joint distributions over complex trajectories is often computationally intractable at runtime, but efficient over-approximation of the risk probability still preserves the system's structural safety guarantees. 
Let $\mathcal{R}(a) \in [0, 1]$ represent the exact risk probability that an action $a$ violates the policy based on the underlying predicates, and let $\tau$ be the global safety threshold. 
If the reference monitor applies a sound upper-bound estimate $\mathcal{R}'(a) \ge \mathcal{R}(a)$, the engine maintains a predictable operational bound:
$$\mathcal{R}'(a) \le \tau \implies \mathcal{R}(a) \le \tau$$
Consequently, the approximation introduces no structural false negatives relative to an exact evaluation. While noisy base classifiers can still propagate errors, the monitor itself adds no additional vulnerability.
Choosing an estimator thus requires balancing this safety guarantee with system utility, keeping the upper bound tight enough to avoid over-blocking that disrupts valid workflows.

\subsection{Threat Model}
We use the standard threat model of a security reference monitor~\citep{Anderson:1972}, in which the attacker has direct or indirect access to an AI agent and can influence the instructions and data that the agent receives. Note that this model allows for indirect prompt-injection attacks as well for the user to be malicious, both of which can result in the agent taking undesirable actions. The actions that the agent takes are in the form of tool calls, where each tool call is verified by a reference monitor for compliance with a security policy before the tool executes. The attacker cannot bypass or disable the reference monitor, nor can it modify the security policy, but has knowledge of the available tools, the security policy, and the tool failure rates. We however need to assume that the attacker cannot invalidate the probabilistic assumptions we make on tool failures
(see the ``Tool threat model'' discussion in \autoref{sec:derivation-state-probabilities}).

\subsection{Related Work}

\textbf{Agent guardrails and defenses.} Early defenses for agents rely on targeted prompting and alignment tuning to encourage policy compliance and prompt-injection resistance~\citep{schick2023toolformer, zheng2024promptdrivensafeguarding, ghalebikesabi2024operationalizingcontextualintegrityprivacyconscious, chen2025struQ, chen2025secalign}. To enforce stronger boundaries, structural defenses isolate untrusted inputs from the agent's planning layer entirely~\citep{debenedetti2025defeatingpromptinjectionsdesign, li2026drift}, while external guardrail toolkits use standalone classification models to filter queries and tool responses~\citep{inan2023llamaguardllmbasedinputoutput, rebedea-etal-2023-nemo}. To avoid the latency overhead of these always-on filters, CausalArmor~\citep{kim2026causalarmorefficientindirectprompt} uses ablation-based attribution to detect when an untrusted input overrides the user's original intent, triggering targeted sanitization and retroactive thought-masking only when a threat is present. While these strategies have been empirically shown to mitigate common attack vectors, they operate heuristically and lack the formal enforcement mechanisms needed to guarantee policy compliance across agent trajectories.

\textbf{Formal policy enforcement.} 
Several frameworks adapt classic reference monitors~\citep{Anderson:1972, crampton2005referencemonitor} to deterministically enforce safety policies over an agent's execution trajectory, aligning with foundational models of execution monitoring~\citep{schneider2000enforceablesecuritypolicies}.
These approaches compile formal specifications, such as temporal logic, regular languages, or behavioral contracts, into runtime guards that intercept untrusted actions before execution~\citep{li2024formalllmintegratingformallanguage, chen2025shieldagent, kamath2025enforcingtemporalconstraintsllm, miculicich2025veriguardenhancingllmagent, tsai2025conesca, palumbo2026formalpolicyenforcementrealworld, wang2026agentspec}.
A parallel line of work focuses on dynamic authorization, enforcing the principle of least privilege by restricting tool-calling capabilities as the execution context evolves~\citep{zhu2025miniscopeprivilegeframeworkauthorizing, shi2026progentsecuringaiagents}. 
While these deterministic techniques provide rigorous guardrails, they are incompatible with noisy, probabilistic predicates unless ad-hoc thresholds are applied.
Our framework addresses this limitation by extending provable enforcement to uncertain environments via distributionally robust risk estimation.

\textbf{Information flow control for agents.}
Privacy policies for autonomous agents frequently restrict what untrusted content can be read, processed, or shared with external recipients. 
This connects to classic Information Flow Control (IFC) literature, which uses security lattices and non-interference properties to ensure that secret data cannot influence public outputs~\citep{denning1976latticeifc, myers1997ifc, sabelfeld2003language}. 
Recent frameworks apply these traditional IFC properties directly to agent runtimes to prevent data exfiltration. 
For instance, system-level defenses disaggregate the architecture into context-aware pipelines to filter untrusted planning inputs~\citep{wu2024systemleveldefenseindirectprompt}, while language model planners leverage dynamic taint-tracking to enforce confidentiality and integrity labels across agent states~\citep{costa2025securingaiagentsinformationflow}.
While these techniques provide formal guarantees for agentic information flow control, they still rely on discrete security labels.
Our approach instead tracks the propagation of continuous sensitivity probabilities, drawing conceptual parallels to Quantitative Information Flow (QIF) frameworks~\citep{Denning1982-qb,gray1991informationflowsecurity}.
The  goal of QIF is to measure the amount of information leaked about a secret by observing the running of a program.
In this paper, the uncertainty arises from the use of non-deterministic components (e.g., using an LLM-based classifier), so the sources of uncertainty in our model are different from QIF.

\textbf{Probabilistic logic programming.}
Probabilistic logic programming frameworks for Prolog~\citep{deRaedt2007ProbLog,fierens2015inference,manhaeve2018deepproblog} and Datalog~\citep{barany2017datalogdppl,Tsamoura2020groundingbottleneckdatalog,li2023scallop}, alongside frameworks for weighted model counting over probabilistic programs~\citep{sang2005wmc-complexity,holtzen2020dice}, reason under uncertainty by combining formal logic with probabilistic predicates.
However, these frameworks assume that all base facts are independent, an assumption that fails to hold in agentic settings where environment states and tool outputs are inherently coupled.
While linear programming frameworks can exactly optimize over measures to handle these arbitrary dependencies~\citep{andersen1996linear}, the required optimization scales exponentially with the number of states modeled.
Even when specialized relational frameworks like JudgeD~\citep{wanders2016judgeD} and Praline~\citep{wang2025praline} compute this optimum within a small error bound, the underlying solver overhead is too inefficient for real-time verification over complex execution traces.

\section{Probabilistic Verification as an Optimization Problem}

To rigorously bound the probability that an agent's trajectory violates a given policy, we formulate verification as a distributionally robust optimization problem.
This framework requires two core components: a set of sound marginal probability bounds on the initial environmental facts, and a formal transition logic that propagates these facts across an execution trace.
In this section, we first detail how a security engineer derives these initial base bounds from empirical tools and calibrated classifiers.
We then formalize these inputs within a robust Probabilistic Datalog framework and construct the exact optimization problem over the resulting execution state measures.

\subsection{Probabilistic Models of Tools}\label{sec:derivation-state-probabilities}
We describe in this section how one might develop probabilistic models of certain tools in the environment that would then be used to compute the final certificate.

\textbf{Deterministic tool failure rates.} To bound the failure probability of a tool (say, an LLM-based or deterministic redaction tool, or an LLM-based PII detector), the security engineer profiles the tool on a representative evaluation dataset with known ground-truth labels.
Specifically, if the tool exhibits $k$ failures across $N$ independent evaluation trials, the empirical failure rate is $\hat{p} = \frac{k}{N}$.
The engineer can establish a sound upper bound $u_{\text{redact\_fail}}$ using a $1-\alpha$ Clopper-Pearson confidence interval:
\[
u_{\text{redact\_fail}} = F_{\beta}^{-1}\left(1-\alpha; k+1, N-k\right)
\]
where $F_{\beta}^{-1}$ represents the inverse cumulative distribution function of a Beta distribution.

\textbf{Correlations across multiple invocations.}
The above assumptions are made marginally, per tool invocation, but do not make any explicit assumptions about correlations between tool or classifier failure rates across multiple invocations in a single agent run (e.g., multiple calls to a PII detector tool). However, in the presence of further information, we can assume for example that the tool errors are positively correlated across multiple runs, or that failure rates for two complementary tools are negatively correlated. Following prior work~\citep{wang2025praline}, we model these as three possible correlation classes: $\text{POS}$ (positively correlated), $\text{NEG}$ (negatively correlated) and $\text{IND}$ (statistically independent).

\textbf{Tool threat model.} We assume that the above probabilistic models of tools are robust in the face of attackers, in other words, tool failures occur independently of attacker actions. This can be justified in one of two ways: 
1. \emph{Tools with probabilistic models execute in sanitized environments:} A tool failure occurs because of an exogenous event or that the inputs to calibrated classifiers only come from trusted sources that an attacker cannot influence.
2. \emph{Tools have intrinsic robustness:} For example, a text classifier that has been trained to be robust to specific perturbations that an attacker can make.

Further validation of this threat model remains out of scope for this paper; we take this assumption as a given in the rest of this paper.

\subsection{Probabilistic Inference in Datalog}\label{sec:probabilistic-datalog}

Probabilistic inference in Datalog begins with a program $\mathcal{D} = (\mathcal{F}, \mathcal{R})$, where $\mathcal{F}$ is a finite set of ground base facts and $\mathcal{R}$ is a set of ground intensional rules used to derive additional facts.
Unlike classical settings, we do not assume that the base facts $f \in \mathcal{F}$ represent independent random variables.
Instead, the true joint distribution over the facts is governed by an unknown probability measure $\mu$.
This acknowledges that in agent environments, base facts can be positively correlated, but their exact correlation structures are unknown to the verifier.
For example, a PII classifier might fail to detect the same piece of sensitive data common to multiple files; similarly, a redaction tool might fail to redact the same sensitive data across these files.

We assume that each base fact $f \in \mathcal{F}$ is associated with known or estimated marginal probability bounds $[\ell_f, u_f] \subseteq [0, 1]$.
For any specific realization of base facts $L \subseteq \mathcal{F}$, also referred to as a ``world,'' the complete deterministic program is evaluated as $L \cup \mathcal{R}$. 
Because ground Datalog programs guarantee a unique least fixed point, the semantics of the program are uniquely defined by its minimal model, denoted $\mathcal{M}(L \cup \mathcal{R})$. 
Let $\mathcal{B}_{\mathcal{D}}$ denote the Herbrand base of the program, encompassing all ground base and derived facts. The goal of distributionally robust probabilistic inference is to bound the success probability $P(q)$ of a safety query $q \in \mathcal{B}_{\mathcal{D}}$ over the set of all valid joint measures $\mu$ that respect these marginal constraints:
$$P(q) = \sum_{L \subseteq \mathcal{F}} \mu(L) \mathbb{I}[q \in \mathcal{M}(L \cup \mathcal{R})]$$

Because Datalog rules are monotone, the query $q$ can be represented as a symbolic lineage formula over the Boolean random variables $x_f$ associated with each base fact $f \in \mathcal{F}$. 
This formula forms a monotone Disjunctive Normal Form (DNF) over the base facts that derive $q$:
$$q \equiv \bigvee_{d \in \text{Derv}(q)} \bigwedge_{f \in \text{facts}(d)} x_f$$
where $\text{Derv}(q)$ represents the operational derivations of $q$ under $\mathcal{R}$, and $\text{facts}(d) \subseteq \mathcal{F}$.

\textbf{The risk of independence semantics.}
In contrast to our distribution-free model, most existing frameworks for probabilistic inference in Datalog closely follow the ProbLog model \citep{deRaedt2007ProbLog}, assuming that the base facts are mutually independent. Formally, for any base fact $f \in \mathcal{F}$ and any subset of remaining facts $S \subseteq \mathcal{F} \setminus \{f\}$, the conditional probability satisfies:$$P(x_f \mid \mathbf{x}_S) = P(x_f)$$where $\mathbf{x}_S$ denotes the joint assignment of the variables in $S$.
Under this assumption, the probability of sampling any single subprogram $L$ collapses to a simple product of marginal probabilities:
$$P(L \mid \mathcal{D}) = \prod_{f \in L} p_f \prod_{f \in \mathcal{F} \setminus L} (1 - p_f)$$
where $P(x_f) = p_f$. Evaluating $P(q \mid \mathcal{D})$ exactly under this model is equivalent to Weighted Model Counting (WMC), which is known to be \#P-hard \citep{sang2005wmc-complexity}. 

More importantly, the independence assumption is fundamentally unsafe for policy verification. For example, if an agent applies the same redaction tool (with failure probability $p$) twice to the same file, WMC assumes independence and computes the joint failure probability as $P(x_{f_1} \land x_{f_2}) = p^2$.
In reality, these steps are perfectly correlated: if the tool misses a piece of PII on the first run, it will also fail on the second, so the true risk remains $p$.
Distributionally robust verification avoids this risk by making no assumptions about these joint correlations.

\subsection{Compiling Datalog Trajectories to a DAG}\label{sec:datalog-compilation}

To evaluate policy compliance over an agent's runtime behavior, we translate its execution trace into a concrete Datalog program $\mathcal{D}$, where $\mathcal{R}$ operationalizes the policy's state transition logic and $\mathcal{F}$ captures the ground observations.
This translation manifests as a directed acyclic (derivation) graph (DAG) $G = (V_F \cup V_R, E)$, where the rule nodes $V_R$ represent the instantiated rules $\mathcal{R}$, and the fact nodes $V_F$ encompass both the initial base facts $\mathcal{F}$ and all subsequent derived facts.
The edges $E$ connect parent facts to child facts via their corresponding rule nodes. This graph structure allows us to decompose the global lineage of a given query into a localized system of algebraic constraints.

Consider an agent tasked by an employee at an enterprise with declassifying \texttt{a.txt}, combining it with \texttt{b.txt}, copying the result to \texttt{res.txt}, and sending the outcome to \texttt{alex@external.com}, a third-party vendor. A set of commands the agent might execute to accomplish this task is as follows:
\begin{lstlisting}
    redact a.txt > i1.txt
    cat b.txt i1.txt > i2.txt
    cp i2.txt res.txt
    send_file res.txt alex@external.com
\end{lstlisting}
The security objective is to determine whether sensitive information could get leaked to the third-party vendor. Given the above trace, this reduces to checking whether \texttt{res.txt} contains sensitive data.

The compilation begins by establishing the base facts $f \in \mathcal{F}$, which represent the source nodes in the graph (having no parent facts).
We initialize the graph with prior marginal probabilities for these base facts.
For example, we might run a PII classifier on \texttt{a.txt} and \texttt{b.txt}, producing probabilities for the facts \texttt{sensitive(a)} and \texttt{sensitive(b)}.
We also start with the probability that \texttt{redact} fails to remove sensitive data, taken from historical failure rates of the redaction tool.
Following standard convention, we prefix each fact $f$ with its marginal probability $p_f$ using the notation $p_f::f$:
\begin{lstlisting}[language=Prolog]
    0.9::sensitive(a).
    0.2::sensitive(b).
    0.1::redact_fail.
\end{lstlisting}
These atomic declarations instantiate the initial set of fact nodes $V_F$.
As the agent executes commands sequentially, each step appends individual rules $r$ to the set of intensional rules $\mathcal{R}$, thereby deriving fact nodes whose truth values depend on upstream predicates.
Note that these rules are predefined by a formal policy which assigns information-flow semantics to each available tool (\autoref{tab:taint-semantics}).
We now describe how each operation in this trajectory maps to transition rules which guide the construction of the DAG.

\begin{table}[t]
\centering
\renewcommand{\arraystretch}{1.1}
\normalsize
\caption{Taint semantics for each transition type. Each bash command is assigned one of these transition types.}
\begin{tabular}{lll}
\toprule
\textbf{Transition} & \textbf{Semantics} \\
\midrule
\textsc{Propagate} & $x_v = x_u$ \\
\textsc{Merge} & $x_v = x_{u_1} \vee x_{u_2}$ \\
\textsc{Declassify} & $x_v = x_u \land x_{\text{redact\_fail}} $ \\
\textsc{CreateClean} & $x_v = \text{false}$ \\
\textsc{CreateTainted} & $x_v = \text{true}$ \\
\bottomrule
\end{tabular}
\label{tab:taint-semantics}
\end{table}

\textbf{Data declassification.}
The agent's first command attempts to strip sensitive information from \texttt{a.txt}.
The semantics of this transformation are captured by a \textsc{Declassify} transition rule:
\begin{lstlisting}[language=Prolog]
    sensitive(i1) :- sensitive(a), 
                     redact_fail.
\end{lstlisting}
This rule establishes a conjunctive rule node $r_1 \in V_R$.
The derived fact node $v_{i1} \in V_F$ becomes a child of $r_1$, which in turn draws its incoming edges from its structural predecessors $\text{pre}(r_1) = \{v_a, v_{\text{redact\_fail}}\}$.
The intermediate file \texttt{i1.txt} is deemed sensitive if the source file was originally sensitive and the redaction failed to remove all sensitive data.

\textbf{Data merging.}
Next, the agent combines the contents of \texttt{b.txt} and the redacted intermediate file, representing a \textsc{Merge} transition:
\begin{lstlisting}[language=Prolog]
    sensitive(i2) :- sensitive(i1).
    sensitive(i2) :- sensitive(b).
\end{lstlisting}
In contrast to the previous step, this operation introduces a logical disjunction.
It instantiates two distinct rule nodes $r_2, r_3 \in V_R$ that both point to the same derived fact node $v_{i2} \in V_F$. 
Consequently, \texttt{i2.txt} inherits sensitivity if either upstream file is sensitive.

\textbf{Data propagation.}
In the last step before \texttt{send\_file}, the agent copies the merged contents to the final outbound file target, representing a \textsc{Propagate} transition:
\begin{lstlisting}[language=Prolog]
    sensitive(res) :- sensitive(i2).
\end{lstlisting}
This rule simply asserts that $v_{res}$ is sensitive if its immediate predecessor $v_{i2}$ is sensitive.

\textbf{Querying the outbound file.}
The trajectory terminates with an external data transmission.
Because this action crosses a trust boundary, the security policy marks the target attribute as a terminal query node $q \in V_F$:
\begin{lstlisting}[language=Prolog]
    query(sensitive(res)).
\end{lstlisting}
This step-by-step compilation yields a complete derivation graph $G$ (\autoref{fig:derivation_graph_example}) which preserves the exact sensitivity provenance chain resulting from the agent's actions.

\begin{figure}[t]
\centering
\begin{tikzpicture}[
    fact/.style={
        draw, 
        rectangle, 
        rounded corners=1pt,
        font=\small\ttfamily, 
        inner sep=4pt,
        minimum height=1.8em
    },
    rule/.style={
        draw, 
        circle, 
        fill=gray!10, 
        font=\footnotesize, 
        inner sep=1pt,
        minimum size=1.5em
    },
    query/.style={
        draw, 
        double, 
        double distance=1.5pt,
        rectangle, 
        rounded corners=1pt,
        font=\small\ttfamily, 
        inner sep=4pt,
        minimum height=1.8em
    },
    edge/.style={
        ->, 
        >=Stealth, 
        thick,
        draw=black
    }
]
    \node[fact] (sa) {sensitive(a)};
    \node[fact] (rf) [right=0.3cm of sa] {redact\_fail};
    \node[fact] (sb) [right=0.6cm of rf] {sensitive(b)};

    \path (sa.south) -- (rf.south) coordinate[midway] (r1_input);
    \node[rule] (r1) [below=0.5cm of r1_input] {$r_1$};

    \node[fact] (si1) [below=0.5cm of r1] {sensitive(i1)};

    \node[rule] (r2) [below=0.5cm of si1] {$r_2$};
    \node[rule] (r3) at (sb |- r2) {$r_3$};

    \path (r2.south) -- (r3.south) coordinate[midway] (r23_output);
    \node[fact] (si2) [below=0.5cm of r23_output] {sensitive(i2)};

    \node[rule] (r4) [below=0.5cm of si2] {$r_4$};
    \node[query] (sres) [below=0.5cm of r4] {sensitive(res)};

    \draw[edge] (sa.south) -- (r1);
    \draw[edge] (rf.south) -- (r1);
    \draw[edge] (r1) -- (si1);

    \draw[edge] (si1.south) -- (r2);
    \draw[edge] (r2) -- (si2.145);
    
    \draw[edge] (sb.south) -- (r3);
    \draw[edge] (r3) -- (si2.35);

    \draw[edge] (si2.south) -- (r4);
    \draw[edge] (r4) -- (sres);

\end{tikzpicture}
\caption{Compiled derivation graph for deriving the sensitivity of \texttt{res}. Fact nodes are boxed, and the query node is double-boxed. Rule nodes used to derive new fact nodes are circled.}
\label{fig:derivation_graph_example}
\end{figure}

Through this derivation graph, we embed the Boolean logic into the real space $\mathbb{R}^{|V_F|}$, mapping each logical gate to its equivalent multilinear polynomial constraint over $\{0, 1\}^{|V_F|}$. Conjunctions map to multiplication $(x_1x_2)$, disjunctions $(x_1 \lor x_2)$ map to inclusion-exclusion $(x_1 + x_2 - x_1x_2)$, and negations $(\lnot x_1)$ map to inversion $(1 - x_1)$.

Applying this mapping to our example yields the following Boolean constraints for each derived fact:
\begin{align*}
    x_{i1} &= x_a x_{\text{redact\_fail}} \\
    x_{i2} &= x_{i1} + x_b - x_{i1}x_b \\
    x_{\text{res}} &= x_{i2}
\end{align*}
Rather than collapsing this chain into a single algebraic derivation, we preserve these intermediate variables to ensure compatibility with a lower-order relaxation.
Formally, for any derived fact $v \in V_F$, its logical parent set $\mathcal{P}(v) \subseteq V_F$ comprises the fact nodes that feed into $v$'s immediate rules.
Consequently, the Boolean transition for any derived fact $v \in V_F$ is expressed as a multilinear polynomial over its immediate parent facts:
$$x_v = \phi_v(x_{\mathcal{P} (v)}).$$

\begin{theorem}\label{thm:Arity2}
For any security policy written in Datalog, its execution on any finite trace of events can be rewritten as a derivation graph where $|\mathcal{P}(v)|\leq 2$ for all $v \in V_F$.
\end{theorem}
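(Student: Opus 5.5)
The plan is to rewrite the grounded derivation graph by a Tseitin-style binarization, introducing fresh intermediate fact nodes so that every logical gate has fan-in at most two. First, ground the Datalog program on the finite trace. Since the trace is finite, the active domain is finite, so the Herbrand base $\mathcal{B}_{\mathcal{D}}$ and the set of ground rule instances are finite. The least fixed point is therefore reached after finitely many rounds. The rule instances actually used in the minimal model, together with their body and head facts, form a finite derivation graph $G=(V_F\cup V_R,E)$ of the kind constructed in \autoref{sec:datalog-compilation}.

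If recursion creates cycles among fact nodes, I will first unroll the fixed-point iteration. The fact $v$ is replaced by stage copies $v^{(0)},\dots,v^{(K)}$, where $K\le|\mathcal{B}_{\mathcal{D}}|$ bounds the number of rounds, and each rule instance connects stage $k$ bodies to a stage $k+1$ head. The final stage copy of $v$ is then logically equivalent to $v$ in the least model, and the resulting graph is acyclic. With this in place, each derived fact satisfies
\[
x_v=\bigvee_{r\in\mathrm{rules}(v)}\ \bigwedge_{w\in\mathrm{body}(r)} \ell_w ,
\]
where each $\ell_w$ is either $x_w$ or, for stratified negation or comparison guards, $1-x_w$. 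Deterministic built-ins such as \texttt{msg.agent\_role == User} are evaluated on the concrete trace, so they become constants and can be dropped or absorbed.

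Next comes the binarization step. For a rule $r$ with body $w_1,\dots,w_m$ where $m>2$, I introduce fresh fact nodes $c_2,\dots,c_{m-1}$ with
\[
c_2=\ell_{w_1}\wedge \ell_{w_2},\qquad c_j=c_{j-1}\wedge \ell_{w_j},\qquad x_r=c_{m-1}\wedge \ell_{w_m}.
\]
I do the same for the disjunction over the rules of $v$, chaining $d_j=d_{j-1}\vee x_{r_j}$. Negations are unary, so they can be applied directly within a gate. Alternatively, a fresh node $\bar w$ with $x_{\bar w}=1-x_w$ can be introduced, which has one parent. Each new node is a fact node whose parent set has size at most two, and $\phi_v$ is then one of $x_ax_b$, $x_a+x_b-x_ax_b$, or $1-x_a$ (possibly composed with negated inputs). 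The blow-up is linear in the total size of the grounded rules.

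The last step is correctness. By induction along a topological order of the new DAG, on every world $L\subseteq\mathcal{F}$ each original fact node takes the same Boolean value as in $\mathcal{M}(L\cup\mathcal{R})$. This holds because associativity of $\wedge$ and $\vee$ makes each chain evaluate to the original gate. The fresh nodes are deterministic functions of the base facts, so they add no new randomness, and the induced distribution of the query is unchanged under every measure $\mu$.

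I expect the main obstacle to be the recursion and unrolling step. I must argue that a bounded number of stages suffices, using monotonicity and the finite Herbrand base, and must handle negation via stratification so that negated atoms only refer to fully computed lower strata. The binarization step itself is routine.
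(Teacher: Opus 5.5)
Your proposal is correct and uses the same core argument as the paper: you binarize each multi-literal rule body and each multi-rule disjunction by introducing auxiliary fact nodes one operand at a time, and associativity of $\wedge$ and $\vee$ keeps every original fact's value unchanged in every world (the paper phrases this as an induction on $|\mathcal{P}(v)|$, you phrase it as explicit chains). Your preliminary steps go beyond the paper's sketch, which takes the compiled DAG as given: grounding over the finite active domain, unrolling recursion into at most $|\mathcal{B}_{\mathcal{D}}|$ stages, and handling negation via stratification. One point to tighten is that, once you allow negation, you should start from all ground rule instances that can fire in some world, not only those used in a single minimal model.
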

\begin{proof}
    This can be shown by induction on the number of immediate parent nodes $n = |\mathcal{P}(v)|$ of a derived fact $v$. In the base case ($n \le 2$) the condition holds trivially.
    In the inductive case, we assume that any fact with $n$ immediate parents can be rewritten into a derivation graph with maximum arity 2.
    In the case of multi-rule disjunction, if the node $v$ is derived by $n+1$ distinct rules, we can isolate the first $n$ rules into an auxiliary node $x_{\text{aux}}$, yielding the binary relation $x_v = x_{\text{aux}} \lor x_{u_{n+1}}$.
    In the case of multi-literal conjunction, if the node $v$ is derived by a Horn clause rule with $n+1$ body literals, we can isolate the first $n$ literals into an auxiliary node $x_{\text{aux}}$, yielding the binary relation $x_v = x_{\text{aux}} \land x_{u_{n+1}}$.
    Both cases follow from the associativity of $\lor$ and $\land$.
\end{proof}

\subsection{Exact Optimization over State Measures}\label{sec:exact-optimization-measures}

After compiling the trajectory, we are left with a DAG where each node represents a state variable.
Let the collection of these variables be denoted $x = (x_1, \dots, x_{|V_F|}) \in \{0,1\}^{|V_F|}$.
Recall from \autoref{sec:probabilistic-datalog} that $\mu$ denotes the joint probability measure over the base facts. 
Because all derived variables are deterministic functions of these base facts, $\mu$ uniquely extends to a joint distribution over the full state vector $x \in \{0,1\}^{|V_F|}$
Additionally, there is a set of unsafe states $\mathcal{U}\subseteq \{0, 1\}^{|V_F|}$, e.g., states in which a set of outbound files are considered sensitive. In this paper, we will assume that $\mathcal{U}$ has the form $\{x: \xtarget=1\}$

The probability of an unsafe execution under a given joint distribution $\mu$ is the measure of this unsafe set:
\[R(\mu) = \sum_{x \in \mathcal{U}} \mu(x)=\sum_{x: \xtarget=1} \mu(x).\]
To bound this risk rigorously, we define $R^*$ as the maximum probability of entering an unsafe state.
We compute this bound by seeking a joint probability measure $\mu$ on $\{0,1\}^{|V_F|}$ that is consistent with all state transitions implied by the execution graph and initial conditions, but which maximizes the risk of unsafe execution.
Known initial conditions on atomic variables $x_v$ (where $\mathcal{P}(v) = \emptyset$) are bounded above and below by marginal probabilities $u_v$ and $\ell_v$, respectively.
We formulate an exact optimization problem over measures $\mu$ as follows:
\begin{equation}
\label{eq:exact-optimization}
\begin{split}
 R^* = \max_{\mu} \quad & \sum_{x \in \mathcal{U}} \mu(x) \\
 \text{s.t.} \quad & \sum_{x} \mu(x) \mathbb{I}[x_v = \phi_v(x_{\mathcal{P}(v)})] = 1 \\ 
 & \qquad\qquad \forall v \in V_F : \mathcal{P}(v) \neq \emptyset \\
&  \ell_v \leq \sum_{x} \mu(x) \mathbb{I}[x_v=1] \leq u_v \\
& \qquad\qquad  \forall v \in V_F : \mathcal{P}(v) = \emptyset \\
&  \mu(x) \geq 0 \quad \forall x \in \{0, 1\}^{|V_F|} \\
& \sum_{x \in \{0, 1\}^{|V_F|}} \mu(x) = 1
\end{split}
\end{equation}
The first constraint restricts the support of the measure $\mu$ to states that strictly satisfy the deterministic transition logic, forcing the probability of impossible states to zero.
The second constraint enforces the lower and upper marginal bounds on the initial base facts based on our prior observations. Finally, the remaining constraints guarantee that $\mu$ constitutes a valid probability distribution by ensuring non-negativity and total mass normalization.

\subsection{Exact Optimization Example}

We illustrate the construction of the exact optimization problem using our running example from \autoref{sec:datalog-compilation}.
The state vector is defined over our six fact variables: $x = (x_a, x_b, x_{\text{redact\_fail}}, x_{i1}, x_{i2}, \xtarget)$ (note that we relabel $x_\mathrm{res}$ to $\xtarget$ to have a consistent notation for the unsafe set). The unsafe set $\mathcal{U}$ corresponds to any state configuration where the outbound file is sensitive, meaning $\mathcal{U} = \{x \mid \xtarget = 1\}$. For the sake of simplicity, we assume that our marginal prior probabilities are tight point probabilities in this example, setting $\ell_v = u_v = p_v$ for all base facts $v \in V_F $ (where $\mathcal{P}(v) = \emptyset$).
Substituting these marginals and our local multilinear polynomials into \autoref{eq:exact-optimization} yields the following program:
\begin{align*}
  R^* = \max_{\mu} \quad & \sum_{x : \xtarget=1} \mu(x) \\
 \text{s.t.} \quad & \mu(x) = 0 \quad \forall x \text{ where } x_{i1} \neq x_a x_{\text{redact\_fail}} \\
 & \mu(x) = 0 \quad \forall x \text{ where } x_{i2} \neq x_{i1} + x_b - x_{i1}x_b \\
 & \mu(x) = 0 \quad \forall x \text{ where } \xtarget \neq x_{i2} \\
 & \sum_{x} \mu(x) x_a = 0.9, \\
 & \sum_{x} \mu(x) x_b = 0.2, \\
 & \sum_{x} \mu(x) x_{\text{redact\_fail}} = 0.1 \\
 & \sum_{x} \mu(x) = 1, \quad \mu(x) \geq 0 \quad \forall x.
\end{align*}

Solving this program yields a maximum leakage probability of $R^* = 0.300$ ($30.0\%)$. This upper bound can be understood analytically via Fréchet bounds. For the rule $x_{i1} = x_a \land x_{\text{redact\_fail}}$, the maximum probability of the intersection is $$\min(P(x_a), P(x_{\text{redact\_fail}})) = \min(0.9, 0.1) = 0.1,$$ achieved when the two facts are perfectly positively correlated, e.g., the redaction tool always fails to remove some type of PII contained in \texttt{a.txt}.
For the disjunction $x_{i2} = x_{i1} \lor x_b$, the optimizer maximizes the total probability by making $x_{i1}$ and $x_b$ mutually exclusive, which eliminates the overlapping penalty term and yields $0.1 + 0.2 = 0.3$.

In contrast, standard probabilistic inference for Datalog evaluates this query using Weighted Model Counting (WMC) as outlined in \autoref{sec:probabilistic-datalog}. WMC operates under the rigid assumption that all base facts are strictly independent, fixing the joint distribution to a product of marginal probabilities.
Evaluating the global lineage formula, $\xtarget \equiv (x_a \land x_{\text{redact\_fail}}) \lor x_b$, under independence corresponds to the following step-by-step calculation:
\begin{align*}
P(x_{i1} = 1) &= 0.9 \times 0.1 = 0.09 \\
P(\xtarget = 1) &= 0.09 + 0.2 - (0.09 \times 0.2) = 0.272    
\end{align*}
Thus, WMC computes a leakage probability of 27.2\%, leaving a 2.8\% risk deficit.
This deficit has severe implications for threshold-based policy enforcement.
If an enterprise sets a global safety threshold to block data-sharing actions with a $\geq$ 30\% leakage probability, the robust optimization approach blocks the action, whereas a reference monitor using WMC incorrectly permits it.
Because empirical marginals are often imperfectly calibrated, distributionally-robust optimization ensures soundness by bounding this worst-case risk.

This divergence compounds significantly as agents operate over more files and tool calls. 
In larger derivation graphs, WMC can severely misestimate the true risk depending on the logical structure.
For instance, in a long chain of conjunctions ($x_1 \land \dots \land x_n$) where the underlying events are actually perfectly correlated, the true risk remains high and is bounded only by the minimum marginal probability.
However, WMC simply multiplies these variables, causing its risk estimate to approach zero and leaving a potentially massive vulnerability undetected.

\section{Relaxation of State Measures Optimization}

\subsection{Semidefinite Programming Formulation}

The exact formulation presented in \autoref{sec:exact-optimization-measures} is an exponentially sized linear program (LP) requiring optimization over $2^{|V_F|}$ distinct state configurations, rendering it computationally intractable for real-time execution monitoring. To overcome this complexity, we upper-bound this LP with a polynomially-sized semidefinite program (SDP) by tracking only up to the second-order moments of the state distribution.

Invoking \autoref{thm:Arity2}, every state transition in our compiled derivation graph has an arity of at most two. This structural property guarantees that the graph logic can be completely projected onto first- and second-order expectations. In the exact LP, the structural constraint forces the joint measure $\mu$ to have zero support on states that violate the Datalog transition rules. Because the logical identity $x_v = \phi_v(x_{\mathcal{P}(v)})$ holds with probability 1 across all valid states, taking the expectation of both sides yields the identity $\mathbb{E}[x_v] = \mathbb{E}[\phi_v(x_{\mathcal{P}(v)})]$. Thus, the structural logic of the derivation graph is preserved exactly in moment space as linear equalities.
\begin{subequations}
\label{eq:Exact-Expectation}
\begin{align}
    R^\star = \quad & \max_{\mu} \quad \mathbb{E}[x_t] \\
    \text{s.t.} \quad & \text{ Feasibility constraints:} \nonumber\\
    & \mathbb{E}[x_v] = \mathbb{E}[\phi_v(x_{\mathcal{P}(v)})] \quad \forall v \in V_F : \mathcal{P}(v) \neq \emptyset \label{eq:Exact-Expectation-f}
\\
    & \mathbb{E}[x_v] \in [\ell_v, u_v] \quad \forall v \in V_F : \mathcal{P}(v) = \emptyset \label{eq:Exact-Expectation-a}
\\
    & \text{ Correlation constraints:} \nonumber \\
    & \mathbb{E}[x_v x_{v'}] - \mathbb{E}[x_v]\mathbb{E}[x_{v'}] \geq 0 \quad \forall (v, v') \in \text{POS} \label{eq:Exact-Expectation-POS}\\
    & \mathbb{E}[x_v x_{v'}] - \mathbb{E}[x_v]\mathbb{E}[x_{v'}] \leq 0 \quad \forall (v, v') \in \text{NEG} \label{eq:Exact-Expectation-NEG}\\
    & \mathbb{E}[x_v x_{v'}] - \mathbb{E}[x_v]\mathbb{E}[x_{v'}] = 0 \quad \forall (v, v') \in \text{IND} \label{eq:Exact-Expectation-IND}
\end{align}
\end{subequations}
where all expectations are evaluated directly with respect to $\mu$. While \autoref{eq:Exact-Expectation} reduces the scope of the constraints to low-order moments, searching for a valid global measure $\mu$ remains NP-hard due to the non-linear, non-convex quadratic terms $\mathbb{E}[x_v]\mathbb{E}[x_{v'}]$ in the covariance definitions. Note that we also add the correlation class constraints described in \autoref{sec:probabilistic-datalog}.

To yield a convex, polynomial-time optimization problem, we drop the requirement of maintaining the full underlying distribution $\mu$. We introduce a symmetric, positive semidefinite moment matrix $\mathbf{M} \in \mathbb{R}^{(|V_F|+1) \times (|V_F|+1)}$ to act as a free decision variable representing these low-order expectations directly:
$$\mathbf{M} = \mathbb{E} \left[ \begin{pmatrix} 1 \\ \mathbf{x} \end{pmatrix} \begin{pmatrix} 1 \\ \mathbf{x} \end{pmatrix}^\top \right] \succeq 0$$  
The matrix is indexed from $0$ to $|V_F|$, where the first row and column capture individual marginal probabilities ($M_{0,v} = \mathbb{E}[x_v]$), and the remaining off-diagonal entries capture pairwise joint probabilities ($M_{v,v'} = \mathbb{E}[x_v x_{v'}]$). 

To linearize the non-convex quadratic products $M_{0,v}M_{0,v'} = \mathbb{E}[x_v]\mathbb{E}[x_{v'}]$, we introduce a sparse set of scalar auxiliary variables $T_{v,v'}$ bounded tightly by continuous McCormick envelopes~\citep{mccormick1976computability} defined over the marginal intervals $[\ell_v, u_v]$. We formalize the complete semidefinite programming relaxation as follows:

\begin{subequations}
\label{eq:SDP-formulation}
\begin{align}
    R^\star_{\text{SDP}} = \quad & \max_{\mathbf{M}, \mathbf{T}} \quad M_{0,t} \\
    \text{s.t.} \quad & \text{Feasibility constraints:} \nonumber\\
    & M_{0,v} = \mathbb{E}_{\mathbf{M}}\left[\phi_v (\mathcal{P}(v))\right] \quad \forall v \in V_F : \mathcal{P}(v) \neq \emptyset \label{eq:SDP-f}\\
    & \ell_v \leq M_{0,v} \leq u_v \quad \forall v \in V_F : \mathcal{P}(v) = \emptyset \label{eq:SDP-a}\\
    & \text{Correlation constraints:} \nonumber\\
    & \text{Let } C = \text{POS} \cup \text{NEG} \cup \text{IND} \nonumber\\
    & M_{v,v'} - T_{v,v'} \geq 0 \quad \forall (v,v') \in \text{POS} \label{eq:SDP-b}\\
    & M_{v,v'} - T_{v,v'} \leq 0 \quad \forall (v,v') \in \text{NEG} \label{eq:SDP-c}\\
    & M_{v,v'} - T_{v,v'} = 0 \quad \forall (v,v') \in \text{IND} \label{eq:SDP-d}\\
    & T_{v,v'} \geq M_{0,v}\ell_{v'} + M_{0,v'}\ell_{v} - \ell_v \ell_{v'} \quad \forall (v,v') \in C \label{eq:SDP-mcc1}\\
    & T_{v,v'} \geq M_{0,v}u_{v'} + M_{0,v'}u_{v} - u_v u_{v'} \quad \forall (v,v') \in C \label{eq:SDP-mcc2}\\
    & T_{v,v'} \leq M_{0,v}\ell_{v'} + M_{0,v'}u_{v} - u_v \ell_{v'} \quad \forall (v,v') \in C \label{eq:SDP-mcc3}\\
    & T_{v,v'} \leq M_{0,v}u_{v'} + M_{0,v'}\ell_{v} - \ell_v u_{v'} \quad \forall (v,v') \in C \label{eq:SDP-mcc4}\\
   &  \text{Moment feasibility constraints:} \nonumber\\
    & M_{v,v} = M_{0,v} \quad \forall v \in \{1, \dots, |V_F|\} \label{eq:SDP-moment-diag}\\
    & M_{0,0} = 1, \quad \mathbf{M} \succeq 0 \label{eq:SDP-psd}
\end{align}
\end{subequations}

For any derived fact $v$ with a binary parent set $\mathcal{P}(v) = \{u, w\}$, the expectation operator $\mathbb{E}_{\mathbf{M}}\left[\phi_v (\mathcal{P}(v))\right]$ in constraint \eqref{eq:SDP-f} is a strictly linear combination of the elements of $\mathbf{M}$:
\begin{align*}
\mathbb{E}_{\mathbf{M}}\left[\phi_v (u, w)\right] = \,\, & M_{u,w}\phi_v(1, 1) \\
+ \, & (1 - M_{0,u} - M_{0,w} + M_{u,w})\phi_v(0, 0) \\
+ \, & (M_{0,u} - M_{u,w})\phi_v(1, 0) \\
+ \, & (M_{0,w} - M_{u,w})\phi_v(0, 1)
\end{align*}

The resulting optimization problem contains a single matrix variable $\mathbf{M}$ with $\frac{1}{2}(|V_F|+1)(|V_F|+2)$ unique entries, and at most $|\mathcal{C}|$ scalar auxiliary variables for $\mathbf{T}$. The number of constraints scales linearly with the number of derived facts $|V_F|$ and the number of inferred correlation pairs $|\mathcal{C}|$. Consequently, the total size of the optimization problem is $O(|V_F|^2 + |\mathcal{C}|)$, rendering it solvable in polynomial time via standard interior-point methods.

\subsection{Soundness of Semidefinite Relaxation}

To anchor our framework within the ideal runtime guarantees of a probabilistic reference monitor (\autoref{sec:probabilistic-execution-monitoring}), we must prove that our relaxation yields a sound risk estimate.
This soundness ensures that the reference monitor will never under-block a policy violation, ensuring strict security guarantees even if conservative computed bounds occasionally induce over-blocking and utility degradation.

\begin{theorem}[Soundness of Semidefinite Relaxation]
Let $R^*$ be the optimal value of the exact optimization problem over joint state measures~(\ref{eq:exact-optimization}), and let $R_{\text{SDP}}^*$ be the optimal value of the relaxed semidefinite program~(\ref{eq:SDP-formulation}). The semidefinite program yields a conservative over-approximation of the true execution risk:
$$R^* \leq R_{\text{SDP}}^*$$
\end{theorem}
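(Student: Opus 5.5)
The plan is the standard relaxation argument: take any feasible measure $\mu$ for the exact problem, map it to a candidate pair $(\mathbf{M},\mathbf{T})$ for \eqref{eq:SDP-formulation}, and check that this pair is feasible and has the same objective value. Taking the supremum over $\mu$ then gives $R^*\le R^*_{\text{SDP}}$.

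The exact problem~(\ref{eq:exact-optimization}) is a finite LP over a compact polytope. I will assume it is feasible, as it is whenever $\ell_v\le u_v$, so its maximum is attained by some $\mu^\star$.

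A caveat about correlation classes comes first. The SDP imposes the POS/NEG/IND constraints \eqref{eq:SDP-b}--\eqref{eq:SDP-d}, but (\ref{eq:exact-optimization}) as written does not. I will therefore read $R^*$ as the value of the exact problem with the covariance constraints \eqref{eq:Exact-Expectation-POS}--\eqref{eq:Exact-Expectation-IND} also imposed on $\mu$. When $\text{POS}\cup\text{NEG}\cup\text{IND}=\emptyset$ this is literally (\ref{eq:exact-optimization}). Without this reading the inequality can fail, because adding constraints only to the SDP can lower its value.

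For the construction, given a feasible $\mu$, I set
\[
\mathbf{M}=\mathbb{E}_\mu\!\left[\begin{pmatrix}1\\ \mathbf{x}\end{pmatrix}\begin{pmatrix}1\\ \mathbf{x}\end{pmatrix}^{\!\top}\right]
\qquad\text{and}\qquad
T_{v,v'}=M_{0,v}M_{0,v'}.
\]
The constraints are then checked in order.
\begin{enumerate}
\item \textbf{Positive semidefiniteness and normalization.} $\mathbf{M}\succeq 0$ because it is a convex combination of rank-one PSD matrices. $M_{0,0}=\sum_x\mu(x)=1$.
\item \textbf{Diagonal constraint.} $M_{v,v}=\mathbb{E}[x_v^2]=\mathbb{E}[x_v]=M_{0,v}$ because $x_v\in\{0,1\}$. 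This gives \eqref{eq:SDP-moment-diag}.
\item \textbf{Marginal bounds.} \eqref{eq:SDP-a} is exactly the marginal constraint of (\ref{eq:exact-optimization}).
\item \textbf{Feasibility constraints.} For \eqref{eq:SDP-f}, the support constraint gives $x_v=\phi_v(x_{\mathcal P(v)})$ $\mu$-almost surely, so $M_{0,v}=\mathbb{E}_\mu[\phi_v(x_u,x_w)]$. By Theorem~\ref{thm:Arity2} I can take $|\mathcal P(v)|\le 2$. I then expand $\phi_v$ over the four Boolean assignments of $(x_u,x_w)$. The probabilities of these cells are $\mathbb{E}[x_ux_w]$, $\mathbb{E}[x_u]-\mathbb{E}[x_ux_w]$, $\mathbb{E}[x_w]-\mathbb{E}[x_ux_w]$, and $1-\mathbb{E}[x_u]-\mathbb{E}[x_w]+\mathbb{E}[x_ux_w]$. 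This reproduces exactly the displayed formula for $\mathbb{E}_{\mathbf M}[\phi_v]$. The unary and constant cases are degenerate instances.
\item \textbf{Correlation constraints.} With $T_{v,v'}=\mathbb{E}[x_v]\mathbb{E}[x_{v'}]$, the expression $M_{v,v'}-T_{v,v'}$ is exactly the covariance. So \eqref{eq:SDP-b}--\eqref{eq:SDP-d} follow from the covariance constraints on $\mu$.
\end{enumerate}

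The step needing the most care is the McCormick constraints \eqref{eq:SDP-mcc1}--\eqref{eq:SDP-mcc4}. I will verify them by the usual factorization. For instance,
\[
(M_{0,v}-\ell_v)(M_{0,v'}-\ell_{v'})\ge 0
\]
rearranges to $T_{v,v'}\ge M_{0,v}\ell_{v'}+M_{0,v'}\ell_v-\ell_v\ell_{v'}$. The other three inequalities come from the analogous products of $(M_{0,v}-\ell_v)$ or $(u_v-M_{0,v})$ with $(M_{0,v'}-\ell_{v'})$ or $(u_{v'}-M_{0,v'})$. Each product is nonnegative provided $M_{0,v}\in[\ell_v,u_v]$ and $M_{0,v'}\in[\ell_{v'},u_{v'}]$.

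For base facts this membership is \eqref{eq:SDP-a}. The obstacle is correlation pairs that involve derived facts, where $[\ell_v,u_v]$ is not specified by the problem data. There I will take the interval to be one that is valid for every feasible $\mu$. The default is $[0,1]$, which always contains $\mathbb{E}[x_v]$. Alternatively, any interval the implementation computes, for example by Fréchet-bound propagation through the DAG, works as long as it is sound for all feasible $\mu$.

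Finally, the objective satisfies $M_{0,t}=\mathbb{E}_\mu[x_t]=\sum_{x\in\mathcal U}\mu(x)$. Applying the construction to $\mu^\star$ therefore gives $R^*=M_{0,t}\le R^*_{\text{SDP}}$.
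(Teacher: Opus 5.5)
Your proof is correct and takes the same route as the paper. Both push a feasible $\mu$ forward to its second-moment matrix with $T_{v,v'}=\mathbb{E}_\mu[x_v]\,\mathbb{E}_\mu[x_{v'}]$, check the SDP constraints one at a time, and observe that the objective value is unchanged. Your caveats are points the paper's proof passes over silently: it actually proves the bound against \eqref{eq:Exact-Expectation}, which already carries the covariance constraints, and it asserts the McCormick inequalities without addressing correlation pairs that involve derived facts, whose intervals $[\ell_v,u_v]$ must be sound (e.g.\ $[0,1]$) for the argument to work.
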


\begin{proof}

We prove soundness by showing that any joint probability distribution $\mu$ that is feasible in the exact linear program \eqref{eq:exact-optimization} maps to a feasible assignment for the decision variables $(\mathbf{M}, \mathbf{T})$ in the semidefinite program \eqref{eq:SDP-formulation}. Start with \eqref{eq:exact-optimization} and substitute $M_{0,v} \leftarrow \mathbb{E}_\mu[x_v]$, $M_{v,v'} \leftarrow \mathbb{E}_\mu[x_v x_{v'}]$, $T_{v,v'} \leftarrow \mathbb{E}_\mu[x_v]\mathbb{E}_\mu[x_{v'}]$. Then, there is a clear correspondence between equations \eqref{eq:Exact-Expectation-a} and \eqref{eq:SDP-a} and similarly for all constraints in \eqref{eq:Exact-Expectation} there is a corresponding constraint in \eqref{eq:SDP-formulation}. Only two constraints involve some nontrivial mapping: The expectation operator is linear, taking the expectation on both sides of the exact structural logic $x_v = \phi_v(x_u, x_w)$ maps perfectly to the linear identity in constraint \eqref{eq:SDP-f}. Finally, since the marginal expectations are bounded by $[\ell_v, u_v]$, the products of these expectations structurally satisfy the bilinear McCormick inequalities (\ref{eq:SDP-mcc1}--\ref{eq:SDP-mcc4}) by definition. 

Thus, if $\mathbf{M}$ is a valid matrix of second order moments under a true probability measure $\mu$ that satisfies the constraints of \eqref{eq:Exact-Expectation}, then it must satisfy the constraints (\ref{eq:SDP-f}--\ref{eq:SDP-mcc4}). Furthermore, since $x^2=x$ for any binary variable $x$, it must also satisfy $\mathbf{M}_{v,v}=\mathbf{M}_{0,v}, M_{00}=1, \mathbf{M}\succeq 0$.

Thus, for any feasible solution $\mu$ to \eqref{eq:Exact-Expectation}, constructing $\mathbf{M}$ to be the matrix of second order moments produces a feasible solution to \eqref{eq:SDP-formulation} with the exact same objective value. However not all solutions $\mathbf{M}$ that are feasible for \eqref{eq:SDP-formulation} are guaranteed to be valid second order moments for a distribution $\mu$ that satisfies \eqref{eq:Exact-Expectation}. Hence the optimal value of \eqref{eq:SDP-formulation} is a guaranteed upper bound on the optimal value of \eqref{eq:Exact-Expectation}.
\end{proof}

\section{Evaluation}

In this section, we evaluate our proposed SDP optimization relaxation against baselines for deterministic and probabilistic verification.
We aim to answer the following research questions:

\begin{description}
\item[\textbf{RQ1}] How effectively does our SDP relaxation balance system utility and security guarantees on terminal agent benchmarks compared to prior work? \emph{We find that the SDP achieves the optimal tradeoff outperforming prior art on utility at various fixed security levels (\autoref{tab:performance_extended}).
} \\

\item[\textbf{RQ2}] How severely do strict independence assumptions underestimate security risks as input correlations increase? \emph{We find that assuming independence can systematically overestimate risks when probabilistic predicates are known to be strictly positively correlated, and the overestimation increases as correlations increase (\autoref{tab:intercode_correlations})}. \\

\item[\textbf{RQ3}] Can we match the performance of prior art on general probabilistic reasoning in security policies written in Datalog? \emph{We find that SDP relaxation is effective even on classical security policies matching prior state of the art \citep{wang2025praline}, albeit at a higher average latency (\autoref{tab:performance_extended}, \autoref{tab:latency}}). \\

\end{description}

The evaluation results, detailed in subsequent sections, showed that our relaxation demonstrates scalability and tightness on security policy tasks with defined input correlations. We also showed that our sound SDP relaxation matches the security of the robust baseline and attains high utility. Finally, we observed that probabilistic inference under independence results in utility degradation under certain correlations.

\subsection{Benchmarks}

We evaluated our framework across a diverse suite of agentic safety and probabilistic logic programming tasks designed to test scalability, multi-turn security enforcement, and resilience to correlated inputs.

\textbf{Intercode-NL2Bash.}
We considered 197 trajectories from the Intercode-NL2Bash benchmark~\citep{yang2023intercode,lin2018nl2bash}.
To compile these linear shell commands into an explicit lineage, we materialized standard shell pipes (\texttt{|}) into intermediate file reads and writes, and we mapped each bash command to one of our taint transition types (\autoref{tab:taint-semantics}).
Because Intercode trajectories lack explicit data exfiltration events, we appended a terminal data-sharing action to the final output of each trace to evaluate policy compliance.

\textbf{ATBench.}
We also considered 377 agent trajectories from ATBench~\citep{liu2026agentdogdiagnosticguardrailframework}.
Following the same compilation workflow as Intercode, we mapped interactive tool invocations directly to our taint transition semantics.
Unlike Intercode, ATBench natively includes explicit data transmission actions across trust boundaries, which we directly leveraged to evaluate policy violations.

\textbf{Side-channel vulnerability analysis.}
To test our relaxation on other security policies written in Datalog, we considered benchmark tasks from Praline~\citep{wang2025praline}.
We considered six side-channel analysis tasks~\citep{wang2019sidechannel}, which model information-flow vulnerabilities introduced by compiler register allocation in cryptographic software, tracking whether sensitive intermediate computation results inadvertently share CPU registers or memory locations.
These tasks contain up to eight queries and involve complex, highly overlapping dependency chains in their derivation graphs.
Unlike Intercode and ATBench, upper and lower prior marginal bounds for these programs are natively provided by the benchmark suite.

\subsection{Experimental Setup}\label{experimental-setup}

To compute derivation graphs for each task, we employed Praline's instrumentation of the Soufflé Datalog evaluation engine \citep{jordan2016souffle}.
While standard Soufflé suppresses duplicate derivation paths for the same relation, this instrumentation allows us to extract complete derivation graphs, capturing all parallel rules that can be applied to derive a given fact.

While our goal is distributional robustness, we utilized correlations when they are known in advance.
This knowledge could exist in agentic settings if a user provides hints, e.g., files in the same directory might share sensitive data, so their PII detector outputs would be positively correlated.
Therefore, we adopted Praline's two-phase correlation inference algorithm to discover positive, negative, or independent relationships between predicates.
In the first phase, the algorithm inferred statistical correlations among input facts in the same correlation class by checking if their joint probabilities deviate from the product of their individual marginals.
The second phase performed a lightweight static analysis, propagating known correlation types forward to output predicates.

With this inference algorithm, we were able to apply known correlation constraints to the SDP relaxation for the side channel tasks, where correlation classes are defined in advance.
For these tasks, prior marginal probabilities for input facts are already provided by the benchmark. 
In contrast, we assumed no specific correlations for tasks in Intercode and ATBench (unless otherwise specified), and we randomly generated classifier bounds $[\ell_v, u_v]$ and redaction failure probabilities to simulate environmental uncertainty.

For all terminal agent tasks (Intercode and ATBench), we computed the ground truth upper bound using the exact optimization linear program (\autoref{eq:exact-optimization}).
Because exact optimization is infeasible for complex derivation graphs, we excluded tasks whose combined number of input facts and rules exceeds 12.
To evaluate our framework, we used the Splitting Conic Solver \citep{Odonoghue2016scs} to solve the SDP.

Inline enforcement requires efficient trajectory evaluation to prevent execution stalls; excessive latency defeats the purpose of autonomy by forcing frequent user intervention.
Therefore, we imposed a 15-second timeout on all solver-based verifiers. If an engine failed to converge within this window, it fell back to its looser bounds (when applicable).

\subsection{Baselines}\label{sec:baselines}

\begin{table*}[ht] 
\centering
\small
\caption{Comparison of model performance and AUC across Intercode, ATBench, and Side Channel benchmarks under varying global security risk thresholds $\tau$. Dashes indicate that there were no positive samples for a given threshold.}
\resizebox{\textwidth}{!}{
\begin{tabular}{llccccccccc}
\toprule
\multirow{2}{*}{\textbf{Threshold}} & \multirow{2}{*}{\textbf{Model}} & \multicolumn{3}{c}{\textbf{Intercode}} & \multicolumn{3}{c}{\textbf{ATBench}} & \multicolumn{3}{c}{\textbf{Side Channel}} \\
\cmidrule(lr){3-5} \cmidrule(lr){6-8} \cmidrule(lr){9-11}
& & Utility & Security & AUC & Utility & Security & AUC & Utility & Security & AUC \\
\midrule
\multirow{4}{*}{\shortstack[l]{\textbf{High} \\ \textbf{Security}\\\textbf{($\tau = 0.2$)}}} 
& Praline & $0.979$ & $\mathbf{1.000}$ & $0.980$ & $\mathbf{1.000}$ & $\mathbf{1.000}$ & $\mathbf{1.000}$ & $\mathbf{1.000}$ & $\mathbf{1.000}$ & --- \\
& Monte Carlo & $\mathbf{1.000}$ & $0.600$ & $0.991$ & $\mathbf{1.000}$ & $0.884$ & $\mathbf{1.000}$ & $\mathbf{1.000}$ & $\mathbf{1.000}$ & --- \\
& Deterministic & $0.986$ & $0.716$ & $0.920$ & $\mathbf{1.000}$ & $0.835$ & $0.986$ & $\mathbf{1.000}$ & $\mathbf{1.000}$ & --- \\
\cmidrule(lr){2-11}
& SDP (Ours) & $\mathbf{1.000}$ & $\mathbf{1.000}$ & $\mathbf{0.998}$ & $\mathbf{1.000}$ & $\mathbf{1.000}$ & $\mathbf{1.000}$ & $\mathbf{1.000}$ & $\mathbf{1.000}$ & --- \\
\midrule
\multirow{4}{*}{\shortstack[l]{\textbf{Medium} \\ \textbf{Security}\\\textbf{($\tau = 0.5$)}}} 
& Praline & $0.979$ & $\mathbf{1.000}$ & $0.975$ & $\mathbf{1.000}$ & $\mathbf{1.000}$ & $\mathbf{1.000}$ & $\mathbf{1.000}$ & $\mathbf{1.000}$ & $\mathbf{1.000}$ \\
& Monte Carlo & $\mathbf{1.000}$ & $0.702$ & $0.989$ & $\mathbf{1.000}$ & $0.919$ & $\mathbf{1.000}$ & $\mathbf{1.000}$ & $0.810$ & $\mathbf{1.000}$ \\
& Deterministic& $\mathbf{1.000}$ & $0.702$ & $0.887$ & $\mathbf{1.000}$ & $0.950$ & $0.993$ & $\mathbf{1.000}$ & $0.143$ & $0.905$ \\
\cmidrule(lr){2-11}
& SDP (Ours) & $\mathbf{1.000}$ & $\mathbf{1.000}$ & $\mathbf{0.997}$ & $\mathbf{1.000}$ & $\mathbf{1.000}$ & $\mathbf{1.000}$ & $\mathbf{1.000}$ & $\mathbf{1.000}$ & $\mathbf{1.000}$ \\
\midrule
\multirow{4}{*}{\shortstack[l]{\textbf{Low}\\\textbf{Security}\\\textbf{($\tau = 0.8$)}}} 
& Praline & $0.971$ & $\mathbf{1.000}$ & $0.972$ & $0.996$ & $\mathbf{1.000}$ & $\mathbf{0.999}$ & $\mathbf{1.000}$ & $\mathbf{1.000}$ & $\mathbf{1.000}$ \\
& Monte Carlo & $\mathbf{1.000}$ & $0.636$ & $\mathbf{0.991}$ & $\mathbf{1.000}$ & $0.787$ & $0.993$ & $\mathbf{1.000}$ & $\mathbf{1.000}$ & $\mathbf{1.000}$ \\
& Deterministic & $0.969$ & $0.939$ & $0.959$ & $0.949$ & $0.965$ & $0.975$ & $0.222$ & $\mathbf{1.000}$ & $0.611$ \\
\cmidrule(lr){2-11}
& SDP (Ours) & $\mathbf{1.000}$ & $\mathbf{1.000}$  & $\mathbf{0.991}$ & $0.983$ & $\mathbf{1.000}$ & $0.998$ & $\mathbf{1.000}$ & $\mathbf{1.000}$ & $\mathbf{1.000}$ \\
\bottomrule
\end{tabular}
}
\label{tab:performance_extended}
\end{table*}

In our evaluation, we considered two probabilistic verification engines as baselines, as well as a deterministic engine with local thresholds applied to probabilistic predicates.

\textbf{Praline.} Praline \citep{wang2025praline} is an inference engine designed for Datalog programs with correlated inputs. 
It first executes the two-phase correlation analysis described in \autoref{experimental-setup}.
Praline then performs approximate inference through the derivation graph using constrained optimization, producing sound initial probability bounds.
To tighten these intervals, Praline employs an binary search refinement procedure.
This loop repeatedly invokes an SMT solver to check constraint satisfiability until converging on a target that is within $\delta$ of the true bounds.
We evaluated Praline using the Solving Constraint Integer Programs (SCIP) solver \citep{scip2008} for constrained optimization and Z3 for SMT solving \citep{demouraz32008}.

\textbf{Monte Carlo.} We implemented a Monte Carlo sampling routine to approximate the trajectory risk under the assumption of strict independence.
For each base fact $v$, we drew a point probability $p_v \sim \text{Uniform}(\ell_v, u_v)$ and instantiated its concrete truth value via a Bernoulli trial $x_v \sim \text{Bernoulli}(p_v)$.
We ran the deterministic Datalog program on these sampled values to evaluate query satisfaction, averaging the final executing risk across $10{,}000$ simulations.

\textbf{Deterministic engine.}
To simulate a deterministic reference monitor, we varied a local binarization threshold from $0.1$ to $0.9$ in increments of $0.1$.
For each local threshold, continuous input probabilities were converted to binary assignments and executed directly through the deterministic Datalog program.

\subsection{RQ1: Security-Utility Trade-Off for Terminal Agent Settings}

To evaluate the security-utility balance across the different engines, we modeled policy enforcement as a binary classification task where a positive label corresponds to a reference monitor blocking an agent's action.
Under this formulation, a False Negative (FN) represents a security failure where a policy-violating trajectory is incorrectly allowed, whereas a False Positive (FP) represents a utility loss where a safe trajectory is unnecessarily blocked.
True Positives (TP) and True Negatives (TN) denote correctly blocked violations and allowed safe actions, respectively.
We formalize \emph{utility} as precision, where $\text{precision} = \frac{\text{TP}}{\text{TP} + \text{FP}}$, which measures the proportion of blocked actions that are genuine security policy violations.
Conversely, we formalize \emph{security} as recall, where $\text{recall} = \frac{\text{TP}}{\text{TP} + \text{FN}}$, which measures the proportion of unsafe actions that were correctly blocked.

\begin{table}[t]
\centering
\normalsize
\caption{Average latency comparison (ms) between Praline and SDP across Intercode, ATBench, and Side Channel.}
\begin{tabular}{lcc}
\toprule
\textbf{Benchmark} & \textbf{Praline} & \textbf{SDP (Ours)} \\
\midrule
Intercode & $1{,}015$ & $\mathbf{221}$ \\
ATBench & $7{,}227$ & $\mathbf{303}$ \\
Side Channel & $\mathbf{330}$ & $1{,}927$ \\
\bottomrule
\end{tabular}
\label{tab:latency}
\end{table}

We evaluated the security-utility trade-off across the benchmarks by sweeping the global safety threshold $\tau$.
We report results for utility, security, and area under the utility-security curve (AUC) in \autoref{tab:performance_extended}.
For each benchmark and model, we performed one evaluation run, only changing the ground truth label depending on the given global threshold.
Therefore, we only report one set of latencies for each benchmark (\autoref{tab:latency}).
For the deterministic engine, we report utility and security for the local threshold that achieved the highest AUC for each benchmark and global safety threshold.

Since Praline produces a sound over-estimate of the actual risk, it achieves perfect security; however, solver timeouts on several tasks forced it to return loose, unrefined bounds, resulting in unnecessary blocking and degraded utility.
Conversely, MC achieves perfect utility because its independence assumption consistently underestimates worst-case risk, ensuring it never over-blocks at the cost of compromised security.
The deterministic engine discards probabilistic information entirely, often leading to low security, utility, or both.
In fact, the top-performing deterministic engine displays a degradation in utility as the global threshold increases and a degradation in security as the global threshold decreases.
Our SDP relaxation effectively balances this trade-off by computing sound, tight bounds that maximize both security and utility without solver timeouts, thereby maintaining low enforcement computational overhead.

\input{figures/pr_curves}

To visualize this balance across verification engines, we plotted the utility-security (precision-recall) curves for the Intercode tasks for $\tau \in \{0.2, 0.5, 0.8\}$ (\autoref{fig:pr_roc_curves}).
The curves demonstrate that for high and medium global security thresholds, the deterministic baseline suffers from poor utility even at relatively low security levels, rendering it ineffective for practical policy enforcement. In contrast, Praline, Monte Carlo, and our SDP relaxation all demonstrate high utility at high security levels across all global thresholds. However, as the security constraints tighten, Praline and Monte Carlo generally exhibit an earlier drop in utility than SDP. This gap is reflected in the overall AUC scores, where our SDP relaxation consistently outperforms the baselines.

\subsection{RQ2: Independence Assumptions}

For \textbf{RQ1}, we concluded that Monte Carlo consistently underestimates the upper bound on the violation probability due to its independence assumption, resulting in perfect utility but lower security.
For \textbf{RQ2}, we aim to understand how independence assumptions can reduce utility under certain correlation assumptions.
Accordingly, we created three versions of the Intercode tasks, each with a different level of correlation among the inputs.
Under no-correlation assumptions, the setting is the same as in \textbf{RQ1}.
Under medium-correlation assumptions, all redactions were declared as independent, and the input files were randomly divided into two correlation classes, indicating that the files within each class were positively correlated.
Under high-correlation assumptions, all redactions were declared as independent, and all input files were assigned to the same correlation class, indicating that all input files were positively correlated.
To compute the ground truth upper bound on the risk probability, we modified the exact optimization LP (\autoref{sec:exact-optimization-measures}) to impose these independent and positive correlation constraints.

In the most general verification setting, an engine that uses Monte Carlo sampling cannot make any assumptions about whether choosing its samples, either using the lower bound $\ell_v$ or upper bound $u_v$ for an input fact $v$, will result in the greatest violation risk.
However, in this specific information-flow setting, state transitions are monotonic, meaning that considering the full range of marginal probabilities $[\ell_v, u_v]$ does not change the result of the exact LP optimization. 
Therefore, we modified Monte Carlo to instantiate the concrete truth value for each input fact $v$ via a Bernoulli trial over its upper bound, i.e., $x_v \sim \text{Bernoulli}(u_v)$, rather than first drawing a point probability from $\text{Uniform}(\ell_v, u_v)$.

We report results for utility and security under each of the three correlation settings (None, Medium, and High) and the three global security risk thresholds $\tau \in \{0.2, 0.5, 0.8\}$ (\autoref{tab:intercode_correlations}).
The results under no correlation assumptions reflect what we observed in \textbf{RQ1}, where Monte Carlo's independence assumptions caused it to consistently underestimate the worst-case risk.
This led Monte Carlo to predict many false negatives (high utility but lower security).
However, under medium-correlation assumptions, Monte Carlo experienced a drop in utility that degraded even more as inputs became highly correlated, particularly under a high-security global threshold.
Without explicit constraints, the worst-case risk was driven by a joint distribution where base input facts were independent, though internal redaction failures remained perfectly correlated.
Because Monte Carlo enforces global independence across all variables, it approximated this unconstrained input profile closely enough to reduce false positives.
However, when medium- or high-input correlations were introduced, the true worst-case risk bound shrunk.
While the SDP relaxation integrates these input constraints to optimize utility, Monte Carlo's rigid independence assumption overestimates the true risk, resulting in excessive false positives.

\begin{table}[t]
\centering
\small 
\caption{Intercode utility and security for Monte Carlo (MC) and SDP under different global security risk thresholds $\tau$ and input correlation assumptions.}
\label{tab:intercode_correlations}
\begin{tabular}{lllcc}
\toprule
\textbf{Threshold} & \textbf{Correlation} & \textbf{Model} & \textbf{Utility} & \textbf{Security} \\
\midrule
\multirow{6}{*}{\shortstack[l]{\textbf{High} \\ \textbf{Security} \\ \textbf{($\tau = 0.2$)}}} 
& \multirow{2}{*}{None} & MC & $0.968$ & $0.958$ \\
&                       & SDP   & $\mathbf{1.000}$ & $\mathbf{1.000}$ \\
\cmidrule(lr){2-5}
& \multirow{2}{*}{Medium} & MC & $0.957$ & $\mathbf{1.000}$ \\
&                         & SDP  & $\mathbf{1.000}$ & $\mathbf{1.000}$ \\
\cmidrule(lr){2-5}
& \multirow{2}{*}{High} & MC & $0.606$ & $\mathbf{1.000}$ \\
&                          & SDP   & $\mathbf{0.982}$ & $\mathbf{1.000}$ \\

\midrule

\multirow{6}{*}{\shortstack[l]{\textbf{Medium} \\ \textbf{Security} \\ \textbf{($\tau = 0.5$)}}} 
& \multirow{2}{*}{None} & MC & $\mathbf{1.000}$ & $0.830$ \\
&                       & SDP   & $\mathbf{1.000}$ & $\mathbf{1.000}$ \\
\cmidrule(lr){2-5}
& \multirow{2}{*}{Medium} & MC & $0.950$ & $\mathbf{1.000}$ \\
&                         & SDP   & $\mathbf{1.000}$ & $\mathbf{1.000}$ \\
\cmidrule(lr){2-5}
& \multirow{2}{*}{High} & MC & $0.833$ & $\mathbf{1.000}$ \\
&                          & SDP   & $\mathbf{1.000}$ & $\mathbf{1.000}$ \\

\midrule

\multirow{6}{*}{\shortstack[l]{\textbf{Low} \\ \textbf{Security} \\ \textbf{($\tau = 0.8$)}}} 
& \multirow{2}{*}{None} & MC & $0.969$ & $0.939$ \\
&                       & SDP   & $\mathbf{1.000}$ & $\mathbf{1.000}$ \\
\cmidrule(lr){2-5}
& \multirow{2}{*}{Medium} & MC & $0.939$ & $\mathbf{1.000}$ \\
&                         & SDP   & $\mathbf{1.000}$ & $\mathbf{1.000}$ \\
\cmidrule(lr){2-5}
& \multirow{2}{*}{High} & MC & $0.853$ & $\mathbf{1.000}$ \\
&                          & SDP   & $\mathbf{1.000}$ & $\mathbf{1.000}$ \\
\bottomrule
\end{tabular}
\end{table}

\subsection{RQ3: General Probabilistic Security Policies}

For \textbf{RQ3}, we evaluated the performance of our SDP relaxation on the side-channel vulnerability analysis tasks the Praline benchmarks.
These tasks provide both upper and lower prior marginal bounds, as well as defined correlation classes.
For these tasks, we adopted the Praline interpretation of input correlation classes.
Specifically, Praline treats inputs in the same correlation class as ``maybe correlated,'' meaning they could be positively correlated or independent.
This is in contrast to our interpretation in \textbf{RQ2}, where inputs defined in the same correlation class were treated as strictly positively correlated.
This is an important distinction, since risk may be maximized with the independent assumption or the correlation assumption, depending on the policy.

Since these tasks contain more derivation graph nodes than Intercode and ATBench, computing the exact LP optimization is intractable.
Therefore, we used the $\delta$-precise upper bounds computed by Praline running \emph{without an execution timeout} to serve as our exact ground-truth oracle.
Our results in \autoref{tab:performance_extended} demonstrate that SDP and Praline attain perfect security and utility across all global thresholds.
Conversely, the independent Monte Carlo baseline and the deterministic engine demonstrated degraded performance across both metrics.
While the introduction of correlation constraints over these complex side-channel graphs caused our SDP framework to experience a higher average latency than Praline (\autoref{tab:latency}), the SDP relaxation avoids the catastrophic worst-case exponential tail latencies and solver timeouts inherent to exact SMT-based reasoning, demonstrating its viability as a predictable, polynomial-time verification engine.

\section{Limitations and Future Work}

\textbf{Trajectory horizon and bound degradation.} The primary limitation of our approach concerns verification over extended execution horizons.
As an agent's trajectory scales, distributionally robust optimization over deep execution traces with high degrees of input-fact merging can cause the risk upper bound to converge to $1.0$.
This phenomenon occurs even if the initial marginal probabilities are minimal.
When environmental correlations are unknown \emph{a priori}, computing a sound upper bound requires conservatively accounting for the worst-case valid joint distribution across every state transition. 
While this bound degradation is an inherent constraint of distribution-free optimization rather than a flaw of our specific relaxation, it implies that the precision of our framework is limited by the quality of the marginals and specified correlation classes.

\textbf{Computational scaling.} Our evaluation demonstrates that as derivation graphs deepen and more correlation constraints are integrated, the matrix dimensions of the SDP relaxation expand. 
This expansion introduces a polynomial increase in the number of decision variables and linear constraints, resulting in higher solver latencies during verification.
Beyond a certain graph complexity, this solver overhead could become prohibitively expensive for real-time policy enforcement.
While an operational reference monitor could mitigate this by resorting to a manual user-approval prompt when the optimization problem encounters scalability limits, this fallback mechanism risks inducing approval fatigue and fundamentally undermines the utility of autonomous agent execution.

\textbf{Ambiguous tool semantics.} Another limitation of this work is its reliance on pre-defined tool semantics.
While many bash commands can be modeled easily (e.g., \texttt{cp} propagates taint from the source file to the destination file and \texttt{touch} creates a clean file, in their simplest invocations), for many tools these semantics are unclear.
Even the Intercode tasks, which represent some of the simplest trajectories among terminal agent benchmarks, contain commands with ambiguous semantics due to their use of execution wrappers around sub-commands, e.g., \texttt{xargs}.
While we manually modeled such commands in our evaluation, such modeling is difficult to implement in practice and runs the risk of allowing mimicry attacks~\citep{10.1145/586110.586145} when tool semantics are underspecified.
First, agents are not usually restricted to a small set of bash commands like Intercode.
Real-world agents typically write and execute arbitrary Python or Bash scripts on-the-fly, making them impossible to model in advance \citep{jimenez2024swebench,merrill2026terminalbench}.
Second, practical deployments may require checking safety specifications beyond strict information-flow tracking---there could be a need to enforce many different agent policies, each requiring different defined transitions.
Therefore, one area for future work is to automatically model commands and scripts as they are generated by the agent, possibly incorporating uncertainty about inferred specifications into the verification itself.

\section{Conclusion}

As autonomous AI agents are increasingly trusted with access to sensitive data, deterministic security policy enforcement fails to manage the inherent ambiguity of real-world environments. 
In this paper, we propose a paradigm shift toward probabilistic agent verification, modeling multi-step trajectory state transitions via a distributionally robust Probabilistic Datalog framework.
We formalize this verification as an exact optimization problem that computes a sound upper bound on policy violation risks without relying on unsafe independence assumptions. 
To ensure runtime viability, we introduce a polynomial-time SDP relaxation that tracks second-order moments of the joint probability distribution. 
Our evaluation across terminal and tool-calling benchmarks demonstrates that our framework bridges the gap between soundness and tractability, maintaining low latency overhead while optimizing the security-utility trade-off. 
Ultimately, our framework lays the groundwork for distributionally robust agentic verification, providing a scalable foundation for securing autonomous agents operating in ambiguous environments.

\bibliography{references}

@article{wang2025praline,
author = {Wang, Jingbo and Halalingaiah, Shashin and Chen, Weiyi and Wang, Chao and Dillig, I\c{s}\i{}l},
title = {Probabilistic Inference for Datalog with Correlated Inputs},
year = {2025},
issue_date = {October 2025},
publisher = {Association for Computing Machinery},
address = {New York, NY, USA},
volume = {9},
number = {OOPSLA2},
url = {https://doi.org/10.1145/3763058},
doi = {10.1145/3763058},
journal = {Proc. ACM Program. Lang.},
month = oct,
articleno = {280},
numpages = {28}
}

@inproceedings{
yao2023react,
title={ReAct: Synergizing Reasoning and Acting in Language Models},
author={Shunyu Yao and Jeffrey Zhao and Dian Yu and Nan Du and Izhak Shafran and Karthik R Narasimhan and Yuan Cao},
booktitle={The Eleventh International Conference on Learning Representations },
year={2023},
url={https://openreview.net/forum?id=WE_vluYUL-X}
}

@misc{costa2025securingaiagentsinformationflow,
      title={Securing AI Agents with Information-Flow Control}, 
      author={Manuel Costa and Boris Köpf and Aashish Kolluri and Andrew Paverd and Mark Russinovich and Ahmed Salem and Shruti Tople and Lukas Wutschitz and Santiago Zanella-Béguelin},
      year={2025},
      eprint={2505.23643},
      archivePrefix={arXiv},
      primaryClass={cs.CR},
      url={https://arxiv.org/abs/2505.23643}, 
}

@article{wiesemann2014dro,
author = {Wiesemann, Wolfram and Kuhn, Daniel and Sim, Melvyn},
title = {Distributionally Robust Convex Optimization},
year = {2014},
issue_date = {December 2014},
publisher = {INFORMS},
address = {Linthicum, MD, USA},
volume = {62},
number = {6},
issn = {0030-364X},
journal = {Operations Research},
month = dec,
pages = {1358–1376},
numpages = {19}
}

@inproceedings{yang2023intercode,
author = {Yang, John and Prabhakar, Akshara and Narasimhan, Karthik and Yao, Shunyu},
title = {InterCode: standardizing and benchmarking interactive coding with execution feedback},
year = {2023},
publisher = {Curran Associates Inc.},
address = {Red Hook, NY, USA},
booktitle = {Proceedings of the 37th International Conference on Neural Information Processing Systems},
articleno = {1035},
numpages = {29},
location = {New Orleans, LA, USA},
series = {NIPS '23}
}

@inproceedings{lin2018nl2bash,
    title = "{NL}2{B}ash: A Corpus and Semantic Parser for Natural Language Interface to the Linux Operating System",
    author = "Lin, Xi Victoria  and
      Wang, Chenglong  and
      Zettlemoyer, Luke  and
      Ernst, Michael D.",
    editor = "Calzolari, Nicoletta  and
      Choukri, Khalid  and
      Cieri, Christopher  and
      Declerck, Thierry  and
      Goggi, Sara  and
      Hasida, Koiti  and
      Isahara, Hitoshi  and
      Maegaard, Bente  and
      Mariani, Joseph  and
      Mazo, H{\'e}l{\`e}ne  and
      Moreno, Asuncion  and
      Odijk, Jan  and
      Piperidis, Stelios  and
      Tokunaga, Takenobu",
    booktitle = "Proceedings of the Eleventh International Conference on Language Resources and Evaluation ({LREC} 2018)",
    month = may,
    year = "2018",
    address = "Miyazaki, Japan",
    publisher = "European Language Resources Association (ELRA)",
    url = "https://aclanthology.org/L18-1491/"
}

@misc{wu2024systemleveldefenseindirectprompt,
      title={System-Level Defense against Indirect Prompt Injection Attacks: An Information Flow Control Perspective}, 
      author={Fangzhou Wu and Ethan Cecchetti and Chaowei Xiao},
      year={2024},
      eprint={2409.19091},
      archivePrefix={arXiv},
      primaryClass={cs.CR},
      url={https://arxiv.org/abs/2409.19091}, 
}

@misc{palumbo2026formalpolicyenforcementrealworld,
      title={Formal Policy Enforcement for Real-World Agentic Systems}, 
      author={Nils Palumbo and Sarthak Choudhary and Jihye Choi and Guy Amir and Prasad Chalasani and Somesh Jha},
      year={2026},
      eprint={2602.16708},
      archivePrefix={arXiv},
      primaryClass={cs.CR},
      url={https://arxiv.org/abs/2602.16708}, 
}

@inproceedings{debenedetti2025defeatingpromptinjectionsdesign,
  author = {Debenedetti, Edoardo and Shumailov, Ilia and Fan, Tianqi and Hayes, Jamie and Carlini, Nicholas and Fabian, Daniel and Kern, Christoph and Shi, Chongyang and Terzis, Andreas and Tram\`er, Florian},
  booktitle = {4th IEEE Conference on Secure and Trustworthy Machine Learning},
  title = {Defeating Prompt Injections by Design},
  url = {https://arxiv.org/abs/2503.18813},
  year = {2026}
}

@inproceedings{
chen2025shieldagent,
title={ShieldAgent: Shielding Agents via Verifiable Safety Policy Reasoning},
author={Zhaorun Chen and Mintong Kang and Bo Li},
booktitle={Forty-second International Conference on Machine Learning},
year={2025},
url={https://openreview.net/forum?id=DkRYImuQA9}
}

@inproceedings{wang2026agentspec,
author = {Haoyu Wang and Chris Poskitt and Jun Sun },
booktitle = { 2026 IEEE/ACM 48th International Conference on Software Engineering (ICSE) },
title = {{ AgentSpec: Customizable Runtime Enforcement for Safe and Reliable LLM Agents }},
year = {2026},
volume = {},
ISSN = {},
publisher = {IEEE Computer Society},
address = {Los Alamitos, CA, USA},
month =May}

@misc{kamath2025enforcingtemporalconstraintsllm,
      title={Enforcing Temporal Constraints for LLM Agents}, 
      author={Adharsh Kamath and Sishen Zhang and Calvin Xu and Shubham Ugare and Gagandeep Singh and Sasa Misailovic},
      year={2025},
      eprint={2512.23738},
      archivePrefix={arXiv},
      primaryClass={cs.PL},
      url={https://arxiv.org/abs/2512.23738}, 
}

@misc{miculicich2025veriguardenhancingllmagent,
      title={VeriGuard: Enhancing LLM Agent Safety via Verified Code Generation}, 
      author={Lesly Miculicich and Mihir Parmar and Hamid Palangi and Krishnamurthy Dj Dvijotham and Mirko Montanari and Tomas Pfister and Long T. Le},
      year={2025},
      eprint={2510.05156},
      archivePrefix={arXiv},
      primaryClass={cs.SE},
      url={https://arxiv.org/abs/2510.05156}, 
}

@article{Odonoghue2016scs,
author = {O'donoghue, Brendan and Chu, Eric and Parikh, Neal and Boyd, Stephen},
title = {Conic Optimization via Operator Splitting and Homogeneous Self-Dual Embedding},
year = {2016},
issue_date = {June      2016},
publisher = {Plenum Press},
address = {USA},
volume = {169},
number = {3},
issn = {0022-3239},
url = {https://doi.org/10.1007/s10957-016-0892-3},
doi = {10.1007/s10957-016-0892-3},
journal = {Journal of Optimization Theory and Applications},
month = jun,
pages = {1042–1068},
numpages = {27}
}

@inproceedings{deRaedt2007ProbLog,
author = {De Raedt, Luc and Kimmig, Angelika and Toivonen, Hannu},
title = {ProbLog: a probabilistic prolog and its application in link discovery},
year = {2007},
publisher = {Morgan Kaufmann Publishers Inc.},
address = {San Francisco, CA, USA},
booktitle = {Proceedings of the 20th International Joint Conference on Artifical Intelligence},
pages = {2468–2473},
numpages = {6},
location = {Hyderabad, India},
series = {IJCAI'07}
}

@misc{liu2026agentdogdiagnosticguardrailframework,
      title={AgentDoG: A Diagnostic Guardrail Framework for AI Agent Safety and Security}, 
      author={Dongrui Liu and Qihan Ren and Chen Qian and Shuai Shao and Yuejin Xie and Yu Li and Zhonghao Yang and Haoyu Luo and Peng Wang and Qingyu Liu and Binxin Hu and Ling Tang and Jilin Mei and Dadi Guo and Leitao Yuan and Junyao Yang and Guanxu Chen and Qihao Lin and Yi Yu and Bo Zhang and Jiaxuan Guo and Jie Zhang and Wenqi Shao and Huiqi Deng and Zhiheng Xi and Wenjie Wang and Wenxuan Wang and Wen Shen and Zhikai Chen and Haoyu Xie and Jialing Tao and Juntao Dai and Jiaming Ji and Zhongjie Ba and Linfeng Zhang and Yong Liu and Quanshi Zhang and Lei Zhu and Zhihua Wei and Hui Xue and Chaochao Lu and Jing Shao and Xia Hu},
      year={2026},
      eprint={2601.18491},
      archivePrefix={arXiv},
      primaryClass={cs.AI},
      url={https://arxiv.org/abs/2601.18491}, 
}

@inproceedings{wang2019sidechannel,
author = {Wang, Jingbo and Sung, Chungha and Wang, Chao},
title = {Mitigating power side channels during compilation},
year = {2019},
isbn = {9781450355728},
publisher = {Association for Computing Machinery},
address = {New York, NY, USA},
url = {https://doi.org/10.1145/3338906.3338913},
doi = {10.1145/3338906.3338913},
booktitle = {Proceedings of the 2019 27th ACM Joint Meeting on European Software Engineering Conference and Symposium on the Foundations of Software Engineering},
pages = {590–601},
numpages = {12},
location = {Tallinn, Estonia},
series = {ESEC/FSE 2019}
}

@inproceedings{sang2005wmc-complexity,
author = {Sang, Tian and Bearne, Paul and Kautz, Henry},
title = {Performing Bayesian inference by weighted model counting},
year = {2005},
isbn = {157735236x},
publisher = {AAAI Press},
booktitle = {Proceedings of the 20th National Conference on Artificial Intelligence - Volume 1},
pages = {475–481},
numpages = {7},
location = {Pittsburgh, Pennsylvania},
series = {AAAI'05}
}

@inproceedings{
wanders2016judgeD,
title={JudgeD: a probabilistic datalog with dependencies},
author={Brend Wanders and Maurice van Keulen and Jan Flokstra},
booktitle={Workshops at the Thirtieth AAAI Conference on Artificial Intelligence},
year={2016},
}

@article{li2023scallop,
author = {Li, Ziyang and Huang, Jiani and Naik, Mayur},
title = {Scallop: A Language for Neurosymbolic Programming},
year = {2023},
issue_date = {June 2023},
publisher = {Association for Computing Machinery},
address = {New York, NY, USA},
volume = {7},
number = {PLDI},
url = {https://doi.org/10.1145/3591280},
doi = {10.1145/3591280},
journal = {Proceedings of the ACM on Programming Languages},
month = jun,
articleno = {166},
numpages = {25}
}

@inproceedings{manhaeve2018deepproblog,
author = {Manhaeve, Robin and Dumancic, Sebastijan and Kimmig, Angelika and Demeester, Thomas and Raedt, Luc De},
title = {DeepProbLog: neural probabilistic logic programming},
year = {2018},
publisher = {Curran Associates Inc.},
address = {Red Hook, NY, USA},
booktitle = {Proceedings of the 32nd International Conference on Neural Information Processing Systems},
pages = {3753–3763},
numpages = {11},
location = {Montr\'{e}al, Canada},
series = {NIPS'18}
}

@article{fierens2015inference,
  title={Inference and learning in probabilistic logic programs using weighted boolean formulas},
  author={Fierens, Daan and Van den Broeck, Guy and Renkens, Joris and Shterionov, Dimitar and Gutmann, Bernd and Thon, Ingo and Janssens, Gerda and De Raedt, Luc},
  journal={Theory and Practice of Logic Programming},
  volume={15},
  number={3},
  pages={358--401},
  year={2015},
  publisher={Cambridge University Press}
}

@article{holtzen2020dice,
author = {Holtzen, Steven and Van den Broeck, Guy and Millstein, Todd},
title = {Scaling exact inference for discrete probabilistic programs},
year = {2020},
issue_date = {November 2020},
publisher = {Association for Computing Machinery},
address = {New York, NY, USA},
volume = {4},
number = {OOPSLA},
url = {https://doi.org/10.1145/3428208},
doi = {10.1145/3428208},
journal = {Proceedings of the ACM on Programming Languages},
month = nov,
articleno = {140},
numpages = {31}
}

@article{andersen1996linear,
  title={A linear programming framework for logics of uncertainty},
  author={Andersen, Kim Allan and Hooker, John N},
  journal={Decision Support Systems},
  volume={16},
  number={1},
  pages={39--53},
  year={1996},
  publisher={Elsevier}
}

@inproceedings{chen2025struQ,
author = {Chen, Sizhe and Piet, Julien and Sitawarin, Chawin and Wagner, David},
title = {StruQ: defending against prompt injection with structured queries},
year = {2025},
isbn = {978-1-939133-52-6},
publisher = {USENIX Association},
address = {USA},
booktitle = {Proceedings of the 34th USENIX Conference on Security Symposium},
articleno = {123},
numpages = {18},
location = {Seattle, WA, USA},
series = {SEC '25}
}

@inproceedings{chen2025secalign,
author = {Chen, Sizhe and Zharmagambetov, Arman and Mahloujifar, Saeed and Chaudhuri, Kamalika and Wagner, David and Guo, Chuan},
title = {SecAlign: Defending Against Prompt Injection with Preference Optimization},
year = {2025},
isbn = {9798400715259},
publisher = {Association for Computing Machinery},
address = {New York, NY, USA},
url = {https://doi.org/10.1145/3719027.3744836},
doi = {10.1145/3719027.3744836},
booktitle = {Proceedings of the 2025 ACM SIGSAC Conference on Computer and Communications Security},
pages = {2833–2847},
numpages = {15},
location = {Taipei, Taiwan},
series = {CCS '25}
}

@inproceedings{myers1997ifc,
author = {Myers, Andrew C. and Liskov, Barbara},
title = {A decentralized model for information flow control},
year = {1997},
isbn = {0897919165},
publisher = {Association for Computing Machinery},
address = {New York, NY, USA},
url = {https://doi.org/10.1145/268998.266669},
doi = {10.1145/268998.266669},
booktitle = {Proceedings of the Sixteenth ACM Symposium on Operating Systems Principles},
pages = {129–142},
numpages = {14},
location = {Saint Malo, France},
series = {SOSP '97}
}

@article{schneider2000enforceablesecuritypolicies,
author = {Schneider, Fred B.},
title = {Enforceable security policies},
year = {2000},
issue_date = {Feb. 2000},
publisher = {Association for Computing Machinery},
address = {New York, NY, USA},
volume = {3},
number = {1},
issn = {1094-9224},
url = {https://doi.org/10.1145/353323.353382},
doi = {10.1145/353323.353382},
journal = {ACM Transactions on Information and System Security},
month = feb,
pages = {30–50},
numpages = {21}
}

@article{denning1976latticeifc,
author = {Denning, Dorothy E.},
title = {A lattice model of secure information flow},
year = {1976},
issue_date = {May 1976},
publisher = {Association for Computing Machinery},
address = {New York, NY, USA},
volume = {19},
number = {5},
issn = {0001-0782},
url = {https://doi.org/10.1145/360051.360056},
doi = {10.1145/360051.360056},
journal = {Communications of the ACM},
month = may,
pages = {236–243},
numpages = {8}
}

@inproceedings{zheng2024promptdrivensafeguarding,
author = {Zheng, Chujie and Yin, Fan and Zhou, Hao and Meng, Fandong and Zhou, Jie and Chang, Kai-Wei and Huang, Minlie and Peng, Nanyun},
title = {On prompt-driven safeguarding for large language models},
year = {2024},
booktitle = {Proceedings of the 41st International Conference on Machine Learning},
articleno = {2547},
numpages = {21},
location = {Vienna, Austria},
series = {ICML'24}
}

@inproceedings{schick2023toolformer,
author = {Schick, Timo and Dwivedi-Yu, Jane and Dess\'{\i}, Roberto and Raileanu, Roberta and Lomeli, Maria and Hambro, Eric and Zettlemoyer, Luke and Cancedda, Nicola and Scialom, Thomas},
title = {Toolformer: language models can teach themselves to use tools},
year = {2023},
publisher = {Curran Associates Inc.},
address = {Red Hook, NY, USA},
booktitle = {Proceedings of the 37th International Conference on Neural Information Processing Systems},
articleno = {2997},
numpages = {13},
location = {New Orleans, LA, USA},
series = {NIPS '23}
}

@article{sabelfeld2003language,
  title={Language-based information-flow security},
  author={Sabelfeld, Andrei and Myers, Andrew C},
  journal={IEEE Journal on selected areas in communications},
  volume={21},
  number={1},
  pages={5--19},
  year={2003},
  publisher={IEEE}
}

@misc{li2024formalllmintegratingformallanguage,
      title={Formal-LLM: Integrating Formal Language and Natural Language for Controllable LLM-based Agents}, 
      author={Zelong Li and Wenyue Hua and Hao Wang and He Zhu and Yongfeng Zhang},
      year={2024},
      eprint={2402.00798},
      archivePrefix={arXiv},
      primaryClass={cs.LG},
      url={https://arxiv.org/abs/2402.00798}, 
}

@misc{kim2026causalarmorefficientindirectprompt,
      title={CausalArmor: Efficient Indirect Prompt Injection Guardrails via Causal Attribution}, 
      author={Minbeom Kim and Mihir Parmar and Phillip Wallis and Lesly Miculicich and Kyomin Jung and Krishnamurthy Dj Dvijotham and Long T. Le and Tomas Pfister},
      year={2026},
      eprint={2602.07918},
      archivePrefix={arXiv},
      primaryClass={cs.CR},
      url={https://arxiv.org/abs/2602.07918}, 
}

@article{Tsamoura2020groundingbottleneckdatalog, title={Beyond the Grounding Bottleneck: Datalog Techniques for Inference in Probabilistic Logic Programs}, volume={34}, url={https://ojs.aaai.org/index.php/AAAI/article/view/6591}, DOI={10.1609/aaai.v34i06.6591}, abstractNote={&amp;lt;p&amp;gt;State-of-the-art inference approaches in probabilistic logic programming typically start by computing the relevant ground program with respect to the queries of interest, and then use this program for probabilistic inference using knowledge compilation and weighted model counting. We propose an alternative approach that uses efficient Datalog techniques to integrate knowledge compilation with forward reasoning with a non-ground program. This effectively eliminates the grounding bottleneck that so far has prohibited the application of probabilistic logic programming in query answering scenarios over knowledge graphs, while also providing fast approximations on classical benchmarks in the field.&amp;lt;/p&amp;gt;}, number={06}, journal={Proceedings of the AAAI Conference on Artificial Intelligence}, author={Tsamoura, Efthymia and Gutierrez-Basulto, Victor and Kimmig, Angelika}, year={2020}, month={Apr.}, pages={10284–10291} }

@article{barany2017datalogdppl,
author = {B\'{a}R\'{a}ny, Vince and Cate, Balder Ten and Kimelfeld, Benny and Olteanu, Dan and Vagena, Zografoula},
title = {Declarative Probabilistic Programming with Datalog},
year = {2017},
issue_date = {December 2017},
publisher = {Association for Computing Machinery},
address = {New York, NY, USA},
volume = {42},
number = {4},
issn = {0362-5915},
url = {https://doi.org/10.1145/3132700},
doi = {10.1145/3132700},
journal = {ACM Transactions on Database Systems},
month = oct,
articleno = {22},
numpages = {35}
}

@inproceedings{crampton2005referencemonitor,
author = {Crampton, Jason},
title = {A reference monitor for workflow systems with constrained task execution},
year = {2005},
isbn = {1595930450},
publisher = {Association for Computing Machinery},
address = {New York, NY, USA},
url = {https://doi.org/10.1145/1063979.1063986},
doi = {10.1145/1063979.1063986},
booktitle = {Proceedings of the Tenth ACM Symposium on Access Control Models and Technologies},
pages = {38–47},
numpages = {10},
location = {Stockholm, Sweden},
series = {SACMAT '05}
}

@misc{inan2023llamaguardllmbasedinputoutput,
      title={Llama Guard: LLM-based Input-Output Safeguard for Human-AI Conversations}, 
      author={Hakan Inan and Kartikeya Upasani and Jianfeng Chi and Rashi Rungta and Krithika Iyer and Yuning Mao and Michael Tontchev and Qing Hu and Brian Fuller and Davide Testuggine and Madian Khabsa},
      year={2023},
      eprint={2312.06674},
      archivePrefix={arXiv},
      primaryClass={cs.CL},
      url={https://arxiv.org/abs/2312.06674}, 
}

@inproceedings{rebedea-etal-2023-nemo,
    title = "{N}e{M}o Guardrails: A Toolkit for Controllable and Safe {LLM} Applications with Programmable Rails",
    author = "Rebedea, Traian  and
      Dinu, Razvan  and
      Sreedhar, Makesh Narsimhan  and
      Parisien, Christopher  and
      Cohen, Jonathan",
    editor = "Feng, Yansong  and
      Lefever, Els",
    booktitle = "Proceedings of the 2023 Conference on Empirical Methods in Natural Language Processing: System Demonstrations",
    month = dec,
    year = "2023",
    address = "Singapore",
    publisher = "Association for Computational Linguistics",
    url = "https://aclanthology.org/2023.emnlp-demo.40/",
    doi = "10.18653/v1/2023.emnlp-demo.40",
    pages = "431--445",
}

@misc{ghalebikesabi2024operationalizingcontextualintegrityprivacyconscious,
      title={Operationalizing Contextual Integrity in Privacy-Conscious Assistants}, 
      author={Sahra Ghalebikesabi and Eugene Bagdasaryan and Ren Yi and Itay Yona and Ilia Shumailov and Aneesh Pappu and Chongyang Shi and Laura Weidinger and Robert Stanforth and Leonard Berrada and Pushmeet Kohli and Po-Sen Huang and Borja Balle},
      year={2024},
      eprint={2408.02373},
      archivePrefix={arXiv},
      primaryClass={cs.AI},
      url={https://arxiv.org/abs/2408.02373}, 
}

@inproceedings{
li2026drift,
title={{DRIFT}: Dynamic Rule-Based Defense with Injection Isolation for Securing {LLM} Agents},
author={Hao Li and Xiaogeng Liu and CHIU Hung Chun and Dianqi Li and Ning Zhang and Chaowei Xiao},
booktitle={The Thirty-ninth Annual Conference on Neural Information Processing Systems},
year={2026},
url={https://openreview.net/forum?id=oY1Xnt83oJ}
}

@inproceedings{tsai2025conesca,
author = {Tsai, Lillian and Bagdasarian, Eugene},
title = {Contextual Agent Security: A Policy for Every Purpose},
year = {2025},
isbn = {9798400714757},
publisher = {Association for Computing Machinery},
address = {New York, NY, USA},
url = {https://doi.org/10.1145/3713082.3730378},
doi = {10.1145/3713082.3730378},
booktitle = {Proceedings of the 2025 Workshop on Hot Topics in Operating Systems},
pages = {8–17},
numpages = {10},
location = {Banff, AB, Canada},
series = {HotOS '25}
}

@misc{shi2026progentsecuringaiagents,
      title={Progent: Securing AI Agents with Privilege Control}, 
      author={Tianneng Shi and Jingxuan He and Zhun Wang and Hongwei Li and Linyu Wu and Wenbo Guo and Dawn Song},
      year={2026},
      eprint={2504.11703},
      archivePrefix={arXiv},
      primaryClass={cs.CR},
      url={https://arxiv.org/abs/2504.11703}, 
}

@misc{zhu2025miniscopeprivilegeframeworkauthorizing,
      title={MiniScope: A Least Privilege Framework for Authorizing Tool Calling Agents}, 
      author={Jinhao Zhu and Kevin Tseng and Gil Vernik and Xiao Huang and Shishir G. Patil and Vivian Fang and Raluca Ada Popa},
      year={2025},
      eprint={2512.11147},
      archivePrefix={arXiv},
      primaryClass={cs.CR},
      url={https://arxiv.org/abs/2512.11147}, 
}

@inproceedings{
zhang2025agent,
title={Agent Security Bench ({ASB}): Formalizing and Benchmarking Attacks and Defenses in {LLM}-based Agents},
author={Hanrong Zhang and Jingyuan Huang and Kai Mei and Yifei Yao and Zhenting Wang and Chenlu Zhan and Hongwei Wang and Yongfeng Zhang},
booktitle={The Thirteenth International Conference on Learning Representations},
year={2025},
url={https://openreview.net/forum?id=V4y0CpX4hK}
}

@inproceedings{zhan2024injecagent,
  title={Injecagent: Benchmarking indirect prompt injections in tool-integrated large language model agents},
  author={Zhan, Qiusi and Liang, Zhixiang and Ying, Zifan and Kang, Daniel},
  booktitle={Findings of the Association for Computational Linguistics: ACL 2024},
  pages={10471--10506},
  year={2024}
}

@inproceedings{
debenedetti2024agentdojo,
title={AgentDojo: A Dynamic Environment to Evaluate Prompt Injection Attacks and Defenses for {LLM} Agents},
author={Edoardo Debenedetti and Jie Zhang and Mislav Balunovic and Luca Beurer-Kellner and Marc Fischer and Florian Tram{\`e}r},
booktitle={The Thirty-eight Conference on Neural Information Processing Systems Datasets and Benchmarks Track},
year={2024},
url={https://openreview.net/forum?id=m1YYAQjO3w}
}

@online{microsoft2026rce-vulnerabilities,
  author    = {Microsoft Defender Security Research Team and Uri Oren and Amit Eliahu and Dor Edry},
  title     = {When prompts become shells: RCE vulnerabilities in AI agent frameworks},
  url       = {https://www.microsoft.com/en-us/security/blog/2026/05/07/prompts-become-shells-rce-vulnerabilities-ai-agent-frameworks/},
  journal   = {Microsoft Security Blog},
  year      = {2026},
  month     = may,
  urldate   = {2026-05-07}
}

@inproceedings{park2023generativagents,
author = {Park, Joon Sung and O'Brien, Joseph and Cai, Carrie Jun and Morris, Meredith Ringel and Liang, Percy and Bernstein, Michael S.},
title = {Generative Agents: Interactive Simulacra of Human Behavior},
year = {2023},
isbn = {9798400701320},
publisher = {Association for Computing Machinery},
address = {New York, NY, USA},
url = {https://doi.org/10.1145/3586183.3606763},
doi = {10.1145/3586183.3606763},
booktitle = {Proceedings of the 36th Annual ACM Symposium on User Interface Software and Technology},
articleno = {2},
numpages = {22},
location = {San Francisco, CA, USA},
series = {UIST '23}
}

@online{bourtoule2026isolationoldvulnerabilities,
  author    = {Lucas Bourtoule},
  title     = {Lack of isolation in agentic browsers resurfaces old vulnerabilities},
  url       = {https://blog.trailofbits.com/2026/01/13/lack-of-isolation-in-agentic-browsers-resurfaces-old-vulnerabilities/},
  journal   = {The Trail of Bits Blog},
  year      = {2026},
  month     = may,
  urldate   = {2026-01-13}
}

@InProceedings{jordan2016souffle,
author="Jordan, Herbert
and Scholz, Bernhard
and Suboti{\'{c}}, Pavle",
editor="Chaudhuri, Swarat
and Farzan, Azadeh",
title="Souffl{\'e}: On Synthesis of Program Analyzers",
booktitle="Computer Aided Verification",
year="2016",
publisher="Springer International Publishing",
address="Cham",
pages="422--430",
isbn="978-3-319-41540-6"
}

@inproceedings{
merrill2026terminalbench,
title={Terminal-Bench: Benchmarking Agents on Hard, Realistic Tasks in Command Line Interfaces},
author={Mike A Merrill and Alexander Glenn Shaw and Nicholas Carlini and Boxuan Li and Harsh Raj and Ivan Bercovich and Lin Shi and Jeong Yeon Shin and Thomas Walshe and E. Kelly Buchanan and Junhong Shen and Guanghao Ye and Haowei Lin and Jason Poulos and Maoyu Wang and Marianna Nezhurina and Di Lu and Orfeas Menis Mastromichalakis and Zhiwei Xu and Zizhao Chen and Yue Liu and Robert Zhang and Leon Liangyu Chen and Anurag Kashyap and Jan-Lucas Uslu and Jeffrey Li and Jianbo Wu and Minghao Yan and Song Bian and Vedang Sharma and Ke Sun and Steven Dillmann and Akshay Anand and Andrew Lanpouthakoun and Bardia Koopah and Changran Hu and Etash Kumar Guha and Gabriel H. S. Dreiman and Jiacheng Zhu and Karl Krauth and Li Zhong and Niklas Muennighoff and Robert Kwesi Amanfu and Shangyin Tan and Shreyas Pimpalgaonkar and Tushar Aggarwal and Xiangning Lin and Xin Lan and Xuandong Zhao and Yiqing Liang and Yuanli Wang and Zilong Wang and Changzhi Zhou and David Heineman and Hange Liu and Harsh Trivedi and John Yang and Junhong Lin and Manish Shetty and Michael Yang and Nabil Omi and Negin Raoof and Shanda Li and Terry Yue Zhuo and Wuwei Lin and Yiwei Dai and Yuxin Wang and Wenhao Chai and Shang Zhou and Dariush Wahdany and Ziyu She and Jiaming Hu and Zhikang Dong and Yuxuan Zhu and Sasha Cui and Ahson Saiyed and Arinbj{\"o}rn Kolbeinsson and Christopher Michael Rytting and Ryan Marten and Yixin Wang and Jenia Jitsev and Alex Dimakis and Andy Konwinski and Ludwig Schmidt},
booktitle={The Fourteenth International Conference on Learning Representations},
year={2026},
url={https://openreview.net/forum?id=a7Qa4CcHak}
}

@inproceedings{
    jimenez2024swebench,
    title={{SWE}-bench: Can Language Models Resolve Real-world Github Issues?},
    author={Carlos E Jimenez and John Yang and Alexander Wettig and Shunyu Yao and Kexin Pei and Ofir Press and Karthik R Narasimhan},
    booktitle={The Twelfth International Conference on Learning Representations},
    year={2024},
    url={https://openreview.net/forum?id=VTF8yNQM66}
}

@InCollection{
scip2008,
title = {Constraint Integer Programming: a New Approach to
  Integrate CP and MIP},
author = {Tobias Achterberg and Timo Berthold and Thorsten Koch and
  Kati Wolter},
booktitle = {Integration of AI and OR Techniques in Constraint
  Programming for Combinatorial Optimization Problems},
series = {Lecture Notes in Computer Science},
volume = {5015},
pages = {6--20},
year = {2008},
publisher = {Springer},
doi = {10.1007/978-3-540-68155-7_4},
}

@InProceedings{demouraz32008,
author="de Moura, Leonardo
and Bj{\o}rner, Nikolaj",
editor="Ramakrishnan, C. R.
and Rehof, Jakob",
title="Z3: An Efficient SMT Solver",
booktitle="Tools and Algorithms for the Construction and Analysis of Systems",
year="2008",
publisher="Springer Berlin Heidelberg",
address="Berlin, Heidelberg",
pages="337--340",
isbn="978-3-540-78800-3"
}

@inproceedings{10.1145/586110.586145,
author = {Wagner, David and Soto, Paolo},
title = {Mimicry attacks on host-based intrusion detection systems},
year = {2002},
isbn = {1581136129},
publisher = {Association for Computing Machinery},
address = {New York, NY, USA},
url = {https://doi.org/10.1145/586110.586145},
doi = {10.1145/586110.586145},
booktitle = {Proceedings of the 9th ACM Conference on Computer and Communications Security},
pages = {255–264},
numpages = {10},
location = {Washington, DC, USA},
series = {CCS '02}
}

@techreport{Anderson:1972,
  author = {Anderson, James P.},
  institution = {U.S. Air Force Electronic Systems Division},
  month = {10},
  number = {ESD-TR-73-51},
  title = {{C}omputer {S}ecurity {T}echnology {P}lanning {S}tudy},
  volume = 2,
  year = 1972
}

@article{mccormick1976computability,
author = {Mccormick, Garth P.},
title = {Computability of global solutions to factorable nonconvex programs: Part I -- Convex underestimating problems},
year = {1976},
issue_date = {December  1976},
publisher = {Springer-Verlag},
address = {Berlin, Heidelberg},
volume = {10},
number = {1},
issn = {0025-5610},
url = {https://doi.org/10.1007/BF01580665},
doi = {10.1007/BF01580665},
journal = {Mathematical Programming},
month = dec,
pages = {147–175},
numpages = {29}
}

@book{Denning1982-qb,
  title     = "Cryptography and Data Security",
  author    = "Denning, Dorothy Elizabeth Robling",
  publisher = "Addison-Wesley",
  month     =  dec,
  year      =  1982,
  address   = "Boston, MA",
  language  = "en"
}

@INPROCEEDINGS{gray1991informationflowsecurity,
  author={Gray, J.W.},
  booktitle={Proceedings. 1991 IEEE Computer Society Symposium on Research in Security and Privacy}, 
  title={Toward a mathematical foundation for information flow security}, 
  year={1991},
  volume={},
  number={},
  pages={21-34},
  doi={10.1109/RISP.1991.130769}}

\end{document}